\tikzstyle{legend}=[draw,fill=white,text width=12em,text ragged,%minimum height=2.5em,%
\tikzstyle{legendsw}=[draw,fill=white,text width=12em,text ragged,%minimum height=2.5em,%
\colorlet{dimgray}{black!5}
\colorlet{lightgray}{black!20}
\newcommand{\print}[1]{\pgfmathparse{#1}\pgfmathresult}
\newcommand{\F}{\mathbb{F}}
\newcommand{\Z}{\mathbb{Z}}
\newcommand{\ga}{\alpha}
\newcommand{\gam}{\gamma}
\newcommand{\gd}{\delta}
\newcommand{\gl}{\lambda}
\newcommand{\gw}{\omega}
\renewcommand{\phi}{\varphi}
\newcommand{\phib}{\bar\phi}
\newcommand{\gL}{\Lambda}
\newcommand{\calL}{\mathcal{L}}
\newcommand{\calM}{\mathcal{M}}
\newcommand{\frakm}{\mathfrak{m}}
\newtheorem{thm}{Theorem}
\newtheorem{lem}[thm]{Lemma}
\newtheorem{prop}[thm]{Proposition}
\newtheorem{ex}{Example}
\newtheorem*{GDA}{Fast Decoding Algorithm for Goppa Codes}
\newtheorem*{FDA}{Fast Decoding Algorithm}
\newcommand{\set}[1]{\{#1\}}
\newcommand{\ceil}[1]{\lceil{#1}\rceil}
\newcommand{\fl}[1]{\lfloor{#1}\rfloor}
\newcommand{\yb}{\bar{y}}
\newcommand{\ob}{\bar{\omega}}
\newcommand{\Lb}{\bar{\Lambda}}
\newcommand{\y}{\mathsf{y}}
\newcommand{\x}{\mathsf{x}}
\newcommand{\ev}{\mathrm{ev}}
\newcommand{\wt}{\mathrm{wt}}
\newcommand{\xdeg}{\deg_x}
\newcommand{\ydeg}{\deg_y}
\newcommand{\obdeg}{\deg_{\ob}}
\newcommand{\Rbar}{\bar{R}}
\newcommand{\Ob}{\bar{\Omega}}
\newcommand{\Wb}{\overline{W}}
\newcommand{\Omegab}{\bar{\Omega}}
\DeclareMathOperator{\LT}{lt}
\DeclareMathOperator{\LM}{lm}
\DeclareMathOperator{\LC}{lc}
\DeclareMathOperator{\im}{im}
\DeclareMathOperator{\res}{res}
\begin{document}

\title{Decoding of Differential AG Codes}

\author{Kwankyu Lee
\thanks{K.~Lee is with the Department of Mathematics and Education, Chosun University, Gwangju 501-759, Korea (e-mail: kwankyu@chosun.ac.kr). He was supported by Basic Science Research Program through the National Research Foundation of Korea (NRF) funded by the Ministry of Education, Science and Technology (2013R1A1A2009714).}%
}

\maketitle

\begin{abstract}
The interpolation-based decoding that was developed for general evaluation AG codes is shown to be equally applicable to general differential AG codes. A performance analysis of the decoding algorithm, which is parallel to that of its companion algorithm, is reported. In particular, the decoding capacities of evaluation AG codes and differential AG codes are seen to be nicely interrelated. As an interesting special case, a decoding algorithm for classical Goppa codes is presented.
\end{abstract}

\IEEEpeerreviewmaketitle

\begin{IEEEkeywords}
Algebraic geometry code, decoding algorithm, interpolation, Gr\"obner base.
\end{IEEEkeywords}

\section{Introduction}

Let $X$ be a smooth geometrically irreducible projective curve defined over a finite field $\F$ of genus $g$. Let $\F(X)$ and $\Omega_X$ denote the function field and the module of differentials of $X$ respectively. Let $P_1,P_2,\dots,P_n$ be distinct rational points on $X$, and $D=P_1+P_2+\dots+P_n$. Let $G$ be an arbitrary divisor on $X$, whose support is disjoint from that of $D$. Recall that $\calL(G)=\set{f\in\F(X)\mid (f)+G\ge 0}$ and $\Omega(G)=\set{\omega\in\Omega_X\mid (\omega)\ge G}$. Then Goppa's famous two codes \cite{goppa1981} are defined by 
\[
	C_\calL(D,G)=\set{(f(P_1),f(P_2),\dots,f(P_n))\mid f\in\calL(G)}
\]
and
\[
	C_\Omega(D,G)=\set{(\res_{P_1}(\omega),\res_{P_2}(\omega),\dots,\res_{P_n}(\omega))\mid \omega\in\Omega(-D+G)},
\]
which are respectively called evaluation AG code and differential AG code. It is well-known that $C_\Omega(D,G)=C_\calL(D,G)^\perp$, whose proof \cite{stichtenoth2009} requires the Riemann-Roch theorem.

\begin{thm}[Riemann-Roch]\label{riekdmd}
Let $A$ be a divisor on $X$. Then
\[
	\dim\calL(A)=\deg(A)+1-g+\dim\Omega(A).
\]
Moreover if $\deg(A)\ge 2g-1$, then $\dim\Omega(A)=0$, and if $\deg(A)<0$, then $\dim\calL(A)=0$.
\end{thm}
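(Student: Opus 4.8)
The plan is to split the statement into a ``Riemann--Roch without duality'' identity, obtained by elementary dimension counting with adeles, and \emph{Serre duality}, which identifies the correction term with $\dim\Omega(A)$. First I would dispose of the trivial vanishing claim: if $f\in\calL(A)$ is nonzero, then $(f)+A\ge 0$, and taking degrees gives $0=\deg(f)\ge -\deg(A)$, so $\calL(A)=0$ whenever $\deg(A)<0$.

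For the main identity, introduce the adele ring $\A_X$ of $X$ and, for a divisor $A$, the $\F$-subspace $\A_X(A)\subseteq\A_X$ consisting of the adeles $(\alpha_P)_P$ with $v_P(\alpha_P)\ge -v_P(A)$ at every place $P$. Comparing $\A_X(A)$ and $\A_X(A')$ for $A\le A'$ through short exact sequences, and using the identification $\calL(A)=\F(X)\cap\A_X(A)$, one shows that $\dim_\F\bigl(\A_X(A')/\A_X(A)\bigr)=\deg(A')-\deg(A)$, and then that the index of specialty
\[
	i(A):=\dim_\F\bigl(\A_X/(\A_X(A)+\F(X))\bigr)
\]
is finite and satisfies $\dim\calL(A)=\deg(A)+1-g+i(A)$ with $g:=i(0)$. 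The first genuine point is the finiteness of $i(A)$, equivalently \emph{Riemann's inequality}, i.e.\ that $\deg(A)-\dim\calL(A)$ is bounded above by a constant independent of $A$; I would prove this by observing that adding one place to $A$ changes $\dim\calL(A)$ by at most $1$, so along $A=mP_0$ the quantity $\deg(A)-\dim\calL(A)$ is nondecreasing and eventually constant, and then reducing an arbitrary $A$ to this case.

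The crux, and the step I expect to be the main obstacle, is Serre duality, namely $i(A)=\dim\Omega(A)$. I would call an $\F$-linear functional $\A_X\to\F$ a \emph{Weil differential} if it vanishes on $\F(X)+\A_X(A)$ for some $A$; the functionals vanishing on a fixed $\F(X)+\A_X(A)$ form the dual of $\A_X/(\F(X)+\A_X(A))$, hence a space of dimension $i(A)$. The key structural fact, proved by a pigeonhole argument against Riemann's inequality, is that the space of all Weil differentials is one-dimensional over $\F(X)$. Since $\Omega_X$ is also one-dimensional over $\F(X)$, and a fixed nonzero $\omega_0\in\Omega_X$ produces a Weil differential $\lambda_{\omega_0}\colon\alpha\mapsto\sum_P\res_P(\alpha_P\omega_0)$---it vanishes on $\F(X)$ by the residue theorem $\sum_P\res_P(f\omega_0)=0$, and it is nonzero because at a single place one can choose $\alpha$ so that $\alpha_P\omega_0$ has a pole with nonzero residue---the assignment $\omega\mapsto\lambda_\omega$ is an $\F(X)$-linear isomorphism from $\Omega_X$ onto the space of Weil differentials. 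A local residue computation then shows that this isomorphism carries $\Omega(A)$ precisely onto the Weil differentials vanishing on $\F(X)+\A_X(A)$, so $\dim\Omega(A)=i(A)$, which is the asserted Riemann--Roch identity.

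Finally, the remaining vanishing statement follows formally. Applying the identity to $A=0$ and to a canonical divisor $W=(\omega_0)$ yields $\deg(W)=2g-2$, and the map $\omega\mapsto\omega/\omega_0$ gives $\dim\Omega(A)=\dim\calL(W-A)$; hence if $\deg(A)\ge 2g-1$ then $\deg(W-A)\le -1<0$, so $\calL(W-A)=0$ by the trivial case already handled, i.e.\ $\dim\Omega(A)=0$.
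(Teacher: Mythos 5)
The paper does not prove this statement at all: Theorem~\ref{riekdmd} is recalled as a classical result, with the proof delegated to the cited reference \cite{stichtenoth2009}, so there is no in-paper argument to compare against. Your outline is essentially the standard adelic proof found in that reference: the trivial vanishing for $\deg(A)<0$, Riemann's inequality and the finiteness of the index of specialty $i(A)$ via the filtration of $\A_X$ by the spaces $\A_X(A)$, and then Serre duality identifying $i(A)$ with $\dim\Omega(A)$ through Weil differentials. The skeleton is correct and the reductions are in the right order. Two places where the real work is concentrated deserve flagging. First, the residue theorem $\sum_P\res_P(f\omega_0)=0$ and the nondegeneracy of the local residue pairing (needed both to see that $\lambda_{\omega_0}\neq 0$ and to check that $\lambda_\omega$ kills $\A_X(A)$ exactly when $v_P(\omega)\ge v_P(A)$ at every place) are themselves substantial theorems over a general perfect base field, especially at closed points of degree greater than one, where one must compose the residue with a trace to land in $\F$; you invoke them but do not indicate a proof. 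Second, you define $g:=i(0)$, whereas the paper takes $g$ to be the genus of $X$; these agree, but strictly speaking that identification (or the equivalent characterization $g=\max_A(\deg(A)-\dim\calL(A)+1)$) is part of what has to be established. Neither point is a gap in the strategy---both are handled in the standard development you are following---but a complete write-up would have to supply them.
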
 

Though the two kinds of AG codes were equally created, their historical development has been somewhat unbalanced. The target of intensive researches on the bounds on the minimum distance and decoding algorithms up to the bounds were usually the differential code or rather the dual code to the evaluation code. Thus all known decoding algorithms for differential AG codes decode $C_\calL(D,G)^\perp$ rather than $C_\Omega(D,G)$ itself in the sense that they work with the data called syndromes defined using functions in $\calL(G)$.  See \cite{feng1993,duursma1993,sakata1995b,hoholdt1998,beelen2008} and many references therein. The bias is also reflected on the terms like primal AG code and dual AG code, which mean evaluation AG code and differential AG code respectively. In this respect, the recent result \cite{kwankyu2014} on the unique decoding algorithm for evaluation AG codes is against to the trend and implies that the duality is not essential for decoding and for bounding the minimum distance of $C_\calL(D,G)$. This makes one conceive of a decoding algorithm for $C_\Omega(D,G)$ that likewise does not rely on the duality, syndromes, or the space $\calL(G)$.

In this paper, we present a fast unique decoding algorithm for general differential AG codes, which does not rely on the duality and does not use syndromes defined by the functions in $\calL(G)$. Essentially it is the interpolation-based decoding algorithm for $C_\calL(D,G)$ of \cite{kwankyu2014} rewritten for the differential AG code $C_\Omega(D,G)$ based on the same principle ideas but using data derived from the relevant space of differentials. Specifically, the ring $R$ defined in this paper \eqref{ckksdf} is the same $R$ defined in \cite{kwankyu2014}, but instead of $R$-module $\Rbar\subset\F(X)$, here we define another $R$-module $\Wb\subset\Omega_X$. Thus the decoding algorithm for differential AG codes works with the polynomials in $R\oplus\Wb$ while the algorithm for evaluation AG codes works in $R\oplus\Rbar$. These changes aside, the basic principles underlying both decoding algorithms are exactly the same. Thus we achieve an equal and symmetric treatment of Goppa's two codes in decoding and bounding the minimum distance.

In Section \ref{kkmcd}, we set up the algebraic framework in which the decoding algorithm works. In Section \ref{mqief}, we present the decoding algorithm after a brief explanation of its structure. In Section \ref{ioqer}, we report a performance analysis. It will be clearly seen that the framework and the algorithm itself resemble the corresponding ones in \cite{kwankyu2014} so closely that, to avoid repetition, we do not prove that the algorithm works correctly nor provide the proofs for the assertions about performance analysis, but instead refer the reader to \cite{kwankyu2014} for almost verbatim details of missing proofs. In Section \ref{dkwdf}, we will see that decoding capacities of both decoding algorithms are nicely interrelated. In Section \ref{dkdkf}, we give an explicit example. In Section \ref{xnmlf}, we consider decoding of Goppa codes. Recall that Goppa codes are subfield subcodes of differential AG codes on the projective line. Hence as a special case, we obtain a decoding algorithm for classical Goppa codes. This algorithm is interesting because Goppa codes are the main workhorse in the McEliece code-based cryptosystem and the speed of decryption is largely dependent on the efficiency of its decoding algorithm.

\section{Preliminaries}\label{kkmcd}

We assume the existence of a rational point $Q$ distinct from the points in the support of $D$. Let
\begin{equation}\label{ckksdf}
	R=\bigcup_{s=0}^\infty\calL(sQ)\subset\F(X).
\end{equation}
For $f\in R$, let $\rho(f)=-v_Q(f)$. The Weierstrass semigroup at $Q$ is then
\[
	\gL =\set{\rho(f)\mid f\in R}=\set{\gl_0,\gl_1,\gl_2,\dots}\subset \Z_{\ge 0},
\]
which is a numerical semigroup whose number of gaps, the positive integers not in $\gL$, is the genus $g$ of $X$. Let $\gamma$ be the smallest positive nongap, and let $\rho(x)=\gamma$ for some $x\in R$. For each $0\le i<\gamma$, let $a_i$ be the smallest nongap such that $a_i\equiv i\pmod{\gam}$ and $\rho(y_i)=a_i$ for some $y_i\in R$. By the properties of $\rho:R\to\Z_{\ge 0}$ inherited from the valuation $v_Q$ of $\F(X)$, the set $\set{y_0,y_1,\dots,y_{\gam-1}}$ forms a basis of $R$ as a free module of rank $\gam$ over $\F[x]$, which we call the Ap\'ery system of $R$. Hence $\set{x^k y_i\mid k\ge 0, 0\le i<\gam}$ is a vector space basis of $R$ over $\F$, whose elements are called the monomials of $R$. The monomials of $R$ are in one-to-one correspondence with the nongaps in $\gL$. For $\gl\in\gL$, we denote by $\phi_\gl\in R$ the unique monomial with $\rho(\phi_\gl)=\gl$. 

Notice that the ring $R$ and the numerical semigroup $\gL$ are the same as defined in \cite{kwankyu2014}. Now come the definitions which are new but correspond to $\Rbar$ and $\Lb$ in \cite{kwankyu2014}. Let 
\[
	\Wb=\bigcup_{s=-\infty}^\infty\Omega(-D+G-sQ)\subset\Omega_X,
\]
which is clearly a module over $R$. For a differential $\omega\in\Wb$, let $\gd(\omega)$ denote the smallest integer $s$ such that $\omega\in\Omega(-D+G-sQ)$. Thus $\gd(\omega)$ is simply $v_Q(G)-v_Q(\omega)$. Let $\Ob=\gd(\Wb)$. Note that $\gL+\Ob=\Ob$, and in this sense $\Ob$ is a numerical $\gL$-module. The integers in $\Ob$ will also be called nongaps. As $\Ob$ contains all large enough integers, for each $0\le i<\gamma$, there exists the smallest nongap $b_i$ of $\Ob$ such that $b_i\equiv i\pmod{\gam}$ and $\gd(\ob_i)=b_i$ for some $\ob_i\in\Wb$. Using the valuative properties of $\gd$, we also see that $\set{\ob_i\mid 0\le i<\gamma}$ forms a basis of $\Wb$ as a free module of rank $\gam $ over $\F[x]$. For $s\in \Ob$, define $\phib_s=x^k \ob_i$ for $i=s\mod\gam$ and $k=(s-b_i)/\gam\ge 0$. Then $\gd(\phib_s)=s$. Thus $\set{\phib_s\mid s\in\Ob}=\set{x^k\ob_i\mid k\ge 0, 0\le i<\gam}$ is a basis of $\Wb$ over $\F$, and will be called the monomials of $\Wb$. The set $\set{\ob_i\mid 0\le i<\gam}$ is called the Ap\'ery system of $\Wb$. One may observe that $\Wb$, as $W$, and $\Ob$ were defined in \cite{kwankyu2014} but not used for decoding. 

Now let us consider the $R$-module
\[
	Rz\oplus\Wb=\set{fz+\omega\mid f\in R, \omega\in\Wb},
\]
with indeterminate $z$. It is also a free $\F[x]$-module of rank $2\gamma$ with free basis $K=\set{y_iz,\ob_i\mid 0\le i<\gamma}$, and that every element of $Rz\oplus\Wb$ can be written as a unique $\F$-linear combination of the monomials in $\calM=\set{x^ky_iz,x^k\ob_i \mid k\ge 0, 0\le i<\gam}$. So we can regard the elements of $Rz\oplus\Wb$ as polynomials over $\F$. We denote $\xdeg(x^ky_iz)=k$, $\ydeg(x^ky_iz)=i$, $\xdeg(x^k\ob_i)=k$, and $\obdeg(x^k\ob_i)=i$. 

Let us review the Gr\"obner basis theory on the $R$-module $Rz\oplus\Wb$. For an integer $s$, the weighted degree of a polynomial $fz+\omega\in Rz\oplus \Wb$ is defined by
\[
	\gd_s(fz+\omega)=\max\set{\rho(f)+s,\gd(\omega)}.
\]
In particular, we have $\gd_s(x^ky_iz)=\gamma k+a_i+s$, $\gd_s(x^k\ob_i)=\gd(x^k\ob_i)=\gam k+b_i$. Then $\gd_s$ induces the weighted degree order $>_s$ on $\calM$, where we break ties by declaring the monomial with $z$ precedes the other without $z$. For $f\in Rz\oplus \Wb$, the notations $\LT_s(f)$, $\LM_s(f)$, and $\LC_s(f)$ are used to denote respectively the leading term, the leading monomial, and the leading coefficient, with respect to $>_s$. Suppose $M$ is an $\F[x]$-submodule of $Rz\oplus \Wb$. A finite subset $B$ of $M$ is called a \emph{Gr\"obner basis} of $M$ with respect to $>_s$ if the leading term of every element of $M$ is an $\F[x]$-multiple of the leading term of some element of $B$. We will write
 \[
	B=\set{g_i,f_i\mid 0\le i<\gamma},
\]
where we understand that the leading term of $g_i$ is in $\Wb$ while that of $f_i$ is in $Rz$. The sigma set $\Sigma_s=\Sigma_s(M)$ of $M$ is the set of all leading monomials of the polynomials in $M$ with respect to $>_s$. The delta set $\Delta_s=\Delta_s(M)$ of $M$ is the complement of $\Sigma_s$ in $\calM$. 

%If $f\in Rz$ or $f\in \Wb$, we omit the superfluous $s$ from these notations. Note that for $f\in Rz\oplus \Wb$, there are unique $f^U\in R$ and $f^D\in\Wb$ such that $f=f^Uz+f^D$. By the definitions, we have
%\begin{equation}\label{cjfjfe}
%	\LM_s(f)\in Rz \iff \rho(f^U)+s\ge \gd(f^D),
%\end{equation}
%where equality holds if and only if $\LM_s(f)\in Rz$ and $\LM_{s-1}(f)\in\Wb$. Note also that if $\LM_s(f)\in Rz$, then $\LM_s(f)=\LM(f^Uz)$, and if $\LM_s(f)\in\Wb$, then $\LM_s(f)=\LM(f^D)$. 

%By the standard results of the theory of Gr\"obner bases, we have
%\[
%	\begin{split}
%	\dim_\F(Rz\oplus \Wb/M)&=|\Delta_s|=|\Delta_s\cap Rz|+|\Delta_s\cap\Wb|\\
%	&=|\Delta\set{f_i^Uz\mid 0\le i<\gamma}|+|\Delta\set{g_i^D\mid 0\le i<\gamma}|.
%	\end{split}
%\]
%Here $\Delta S$ with $S\subset Rz$ denotes the set of monomials of $Rz$ that is not an $\F[x]$-multiple of the leading monomial of any element in $S$. The notation $\Delta S$ with $S\subset\Wb$ is similarly understood.

%Finally there is a simple criterion to recognize a Gr\"obner basis of an $\F[x]$-submodule of $Rz\oplus \Wb$. 
%
%\begin{prop}\label{xnskw}
%Let $M$ be a submodule of $Rz\oplus \Wb$, and $B$ generates $M$ over $\F[x]$. If elements of $B$ have leading terms with respect to $>_s$ that are $\F[x]$-multiples of distinct elements of $K$, then $B$ is a Gr\"obner basis of $M$ with respect to $>_s$. If this is the case, $B$ is also a free basis of $M$.    
%\end{prop}
%
%For more discussion on Proposition \ref{xnskw} and on the general theory of Gr\"obner bases, we refer to \cite{cox2005}.

The \textit{residue} map 
\[
	\res:\Wb\to\F^n,\quad\omega\mapsto(\res_{P_1}(\omega),\res_{P_2}(\omega),\dots,\res_{P_n}(\omega))
\]
is linear over $\F$. Thus the differential AG code $C=C_\Omega(D,G)=\res(\Omega(-D+G))$ is a linear code of length $n$ over $\F$. Note that $\set{\phib_s\mid s\in\Ob,s\le 0}$ is a basis of $\Omega(-D+G)$ as a vector space over $\F$. Let $k$ be the dimension of $C$. Then there is a set $S=\set{s_0,s_1,\dots,s_{k-1}}\subset\set{s\mid s\in\Ob,s\le 0}$ such that $\set{\res(\phib_s)\mid s\in S}$ is a basis of $C$. 
Note that the map $\res$ is surjective onto $\F^n$. Indeed we can show that $\res(\Omega(-D+G-sQ))=\F^n$ for $s>|G|=\deg(G)$ by the Riemann-Roch theorem. Let $h_i\in \Wb$ be the differential such that $\res(h_i)$ is the $i$th element of the standard basis of $\F^n$. Let $J$ be the kernel of $\res$. Note that $J$ is a submodule of $\Wb$ over $R$, and also over $\F[x]$. Let $\set{\eta_i\mid 0\le i<\gam}$ be a Gr\"obner basis of $J$ over $\F[x]$ such that $\obdeg(\LT(\eta_i))=i$. Then we have
\[
	\sum_{0\le i<\gam}\xdeg(\LT(\eta_i))=|\Delta(J)|=\dim_\F \Wb/J=n,
\] 
which corresponds to Proposition 2 of \cite{kwankyu2014} and can be proved in a similar way.

%For a codeword $c\in C$, there corresponds a \textit{message} $m=(m_{s_0},m_{s_1},\dots,m_{s_{k-1}})\in\F^k$ such that $c=\res(\mu)$ with 
%\[
%	\mu=\sum_{i=0}^{k-1}m_{s_i}\ob_{s_i}\in\Omega(-D+G).
%\]

%\begin{IEEEproof}
%The equalities except the last one are standard results of the Gr\"obner basis theory. To see the last equality, note that for all large enough $s$, 
%\[
%	\dim_\F \Wb/J=\dim_\F\Omega(-D+G-sQ)/\Omega(G-sQ)=n.
%\]
%\end{IEEEproof}

\section{Decoding Algorithm}\label{mqief}

We assume a codeword is sent through a communication channel and a vector $v\in\F^n$ is received. Thus we suppose $v=c+e$ with a codeword $c$ and the error vector $e$. Then $c=\res(\mu)$ with a unique differential
\[
	\mu=\sum_{s\in S}m_s\phib_s\in\Omega(-D+G).
\]
For all integer $0<s\in\Ob$, let $v^{(s)}=v$, $c^{(s)}=c$, and $\mu^{(s)}=\mu$. For $s\in S$, let
\[
\begin{aligned}
	\mu^{(s-1)}&=\mu^{(s)}-m_s\phib_s,\\
	c^{(s-1)}&=c^{(s)}-\res(m_s\phib_s),\\
	v^{(s-1)}&=v^{(s)}-\res(m_s\phib_s),
\end{aligned}
\]
and for $s\le 0$ but $s\notin S$, let $\mu^{(s-1)}=\mu^{(s)}$, $c^{(s-1)}=c^{(s)}$, $v^{(s-1)}=v^{(s)}$. Then we have $\mu^{(s)}\in \Omega(-D+G-sQ)$, $c^{(s)}=\res(\mu^{(s)})$, and $v^{(s)}=c^{(s)}+e$ for all $s$. 

The \textit{interpolation module} $I_v$ for a vector $v=(v_1,v_2,\dots,v_n)$ is now defined by
\begin{equation}\label{akddd}
	I_v=\set{fz+\omega\in Rz\oplus \Wb\mid f(P_i)v_i+\res_{P_i}(\omega)=0, 1\le i\le n},
\end{equation}
which is a submodule of $Rz\oplus\Wb$ over $R$. Let $h_v=\sum_{i=1}^nv_i h_i$ so that $\res(h_v)=v$. Then we see that $I_v=R(z-h_v)+J$. Hence the set
\begin{equation}\label{cjsjsw}
	\set{\eta_i,y_i(z-h_v)\mid 0\le i<\gamma}
\end{equation}
is a Gr\"obner basis of $I_v$ with respect to $>_{\gd(h_v)}$. 

The Fast Decoding Algorithm for differential AG codes, displayed in Figure \ref{keprer}, starts with the basis \eqref{cjsjsw} and iterates the substeps Pairing, Voting, and Rebasing, computing a Gr\"obner basis of $I_{v^{(s-1)}}$ from that of $I_{v^{(s)}}$ while $\frakm_s\in\F$ is computed by majority voting for $s\in S$. Let
\[
	\nu(s)=\frac{1}{\gam }\sum_{0\le i<\gam}\max\set{\gd(\eta_{i'})-\rho(y_i)-s,0}=|\Delta(J)\cap\Sigma(R\phib_s)|
\]
for $s\in S$, and define $d_\Omega=\min\set{\nu(s)\mid s\in S}$. Then it can be shown that $\frakm_s=m_s$ if $2\wt(e)<\nu(s)$ for $s\in S$, and hence the algorithm succeeds in iteratively computing $m_s$ for all $s\in S$ if $\wt(e)\le\tau=\fl{(d_\Omega-1)/2}$. The proof is mostly identical with the corresponding one in \cite{kwankyu2014}, with some obvious change of notations. So we leave out the proof. 

Note that the Fast Decoding Algorithm for differential AG codes is also enhanced with the speedup techniques introduced in the Section III.E of \cite{kwankyu2014} for evaluation AG codes. A minor difference is due to the different inequality in the following theorem.

\begin{thm}\label{lalkaa}
Suppose $\mu\in\Omega(-D+G-sQ)$ and $v\in\F^n$. If for a nonzero $f\in I_{v}$,
\[
	\gd_s(f)+\wt(v-\res(\mu))+2g-1\le|G|,
\]
then $f(\mu)=0$. Here $f(\mu)$ denotes the $f$ with the variable $z$ substituted with $\mu$.
\end{thm}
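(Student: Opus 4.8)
The plan is to write $f=gz+\omega$ with $g\in R$ and $\omega\in\Wb$, put $\eta:=f(\mu)=g\mu+\omega\in\Omega_X$, and prove that $\eta$ is a global differential with $(\eta)\ge B$ for a divisor $B$ with $\deg B\ge 2g-1$; the Riemann--Roch theorem (Theorem~\ref{riekdmd}) then gives $\dim_\F\Omega(B)=0$, so $\eta=0$. Everything reduces to lower bounds for $v_P(\eta)$ at three kinds of places: the points outside $\operatorname{supp}(D)\cup\set{Q}$, the point $Q$ itself, and the evaluation points $P_1,\dots,P_n$. Throughout set $e:=v-\res(\mu)$ and $T:=\set{i\mid e_i\ne 0}$, so that $|T|=\wt(v-\res(\mu))$.

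First I would dispose of the easy places. If $P\notin\operatorname{supp}(D)\cup\set{Q}$, then $v_P(g)\ge 0$ (as $g\in\calL(mQ)$ for some $m$), $v_P(\mu)\ge v_P(G)$ (since $\mu\in\Omega(-D+G-sQ)$ and $P$ avoids both $D$ and $Q$), and $v_P(\omega)\ge v_P(G)$ (since $\omega\in\Wb\subseteq\bigcup_t\Omega(-D+G-tQ)$), whence $v_P(\eta)\ge v_P(G)$. At $Q$, from $\gd_s(f)=\max\set{\rho(g)+s,\gd(\omega)}$ I get $v_Q(g)=-\rho(g)\ge s-\gd_s(f)$ and, using $\gd(\omega)=v_Q(G)-v_Q(\omega)$, also $v_Q(\omega)=v_Q(G)-\gd(\omega)\ge v_Q(G)-\gd_s(f)$; together with $v_Q(\mu)\ge v_Q(G)-s$ this yields $v_Q(\eta)\ge v_Q(G)-\gd_s(f)$.

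The substantive part is the evaluation points. At every $P_i$ both $\mu$ and $\omega$ have at most a simple pole, because $v_{P_i}(-D+G-tQ)=-1$, and $g$ is regular there; hence $\eta$ has at most a simple pole at $P_i$, giving $v_{P_i}(\eta)\ge-1$ unconditionally. For $i\notin T$ I would invoke the local residue identity $\res_{P_i}(g\mu)=g(P_i)\res_{P_i}(\mu)$ (valid since $g$ is regular and $\mu$ has at most a simple pole at $P_i$); then $\res_{P_i}(\eta)=g(P_i)v_i+\res_{P_i}(\omega)=0$ because $f\in I_v$ and $v_i=\res_{P_i}(\mu)$. A differential with at most a simple pole and vanishing residue is regular there, so $v_{P_i}(\eta)\ge 0=v_{P_i}(G)$ for $i\notin T$. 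Assembling all of this, $\eta\in\Omega(B)$ with $B=G-\gd_s(f)\,Q-\sum_{i\in T}P_i$ and $\deg B=|G|-\gd_s(f)-\wt(v-\res(\mu))\ge 2g-1$ by hypothesis, so Theorem~\ref{riekdmd} forces $\eta=0$.

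The main obstacle I expect is the bookkeeping at the evaluation points, in particular making the residue step rigorous: justifying $\res_{P_i}(g\mu)=g(P_i)\res_{P_i}(\mu)$ from the local Laurent expansion and using that a simple-pole differential with zero residue is holomorphic. One should also stay slightly careful about whether $Q$ lies in $\operatorname{supp}(G)$, but keeping $v_Q(G)$ symbolic rather than assuming it is zero makes the $Q$-estimate go through verbatim, and the degenerate case $g=0$ (where $f=\omega\in J$) is covered by the same estimates with $T$ playing no role.
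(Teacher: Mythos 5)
Your proposal is correct and follows essentially the same route as the paper: both show that $f(\mu)$ lies in $\Omega\bigl(-\sum_{i\in T}P_i+G-\gd_s(f)Q\bigr)$ (you verify the needed valuation and residue estimates place by place, which the paper leaves implicit) and then conclude $f(\mu)=0$ from the vanishing of $\dim\Omega$ of a divisor of degree at least $2g-1$ via the Riemann--Roch theorem. No substantive difference.
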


\begin{IEEEproof}
Let $\gd_s(f)=d$. Then $f(\mu)\in\Omega(-D+G-dQ)$.  As $f\in I_v$, $\res_{P_i}(f(\mu))=0$ for $i$ with $v_i=\res_{P_i}(\mu)$. It follows that
\[
	f(\mu)\in\Omega(-\sum_{v_i\neq\res_{P_i}(\mu)}P_i+G-dQ).
\]
Hence by the Riemann-Roch theorem, if $f(\mu)$ is nonzero, then we must have $-\wt(v-\res(\mu))+|G|-d<2g-1$.
\end{IEEEproof}

According to Theorem \ref{lalkaa}, a polynomial $f$ in $I_{v^{(s)}}$ satisfying the condition $\gd_s(f)+\wt(e)+2g-1\le |G|$ is called a \emph{$Q$-polynomial} for $v^{(s)}$. In the Fast Decoding Algorithm, we actually use the condition $\gd_s(f)+\tau+2g-1\le |G|$ since if we assume $\wt(e)\le \tau$, then a polynomial $f$ in $B^{(s)}$ satisfying the condition is a $Q$-polynomial for $v^{(s)}$. See the step Q. Since $|\Delta_s(I_{v^{(s)}})\cap Rz|\le \wt(e)$ (see Corollary 7 in \cite{kwankyu2014}), if we have $|\Delta_s(I_{v^{(s)}})\cap Rz|>\tau$ during decoding, then we must have $\wt(e)>\tau$, and the decoder may declare \emph{Decoding Failure}. See the step F.

\begin{figure}
\begin{center}
\parbox{.9\textwidth}{
\begin{FDA}
Let $v\in\F^n$ be the received vector. 
\begin{itemize}
\item[\textbf{S1}] Compute $h_v=\sum_{i=1}^nv_i h_i$. Let $N=\gd(h_v)$, and set 
\[
	B^{(N)}=\set{\eta_i,y_i(z-h_v)\mid 0\le i<\gam}.
\]
Let $\frakm_s=0$ for $s$ with $N<s\in S$. If $N\le 0$, then set $\frakm_s\in\F$ such that $h_v=\sum_{s\in S}\frakm_s\phib_s$, and go to the step S3.
\item[\textbf{S2}] Repeat the following for $s$ from $N$ to $s_0$. Let $B^{(s)}=\set{g_i^{(s)},f_i^{(s)}\mid 0\le i<\gam}$ be a Gr\"obner basis of $I_{v^{(s)}}$ with respect to $>_s$ where 
\[
	\begin{aligned}
	g_i^{(s)}&=\sum_{0\le j<\gam}c_{i,j}y_jz+\sum_{0\le j<\gam}d_{i,j}\ob_j\\
	f_i^{(s)}&=\sum_{0\le j<\gam}a_{i,j}y_jz+\sum_{0\le j<\gam}b_{i,j}\ob_j
	\end{aligned}
\]
and let $\nu_i^{(s)}=\LC(d_{i,i})$. 
\begin{itemize}
\item[\textbf{F}] 
If $\sum_{0\le i<\gam}\deg(a_{i,i})>\tau$, then declare \emph{Decoding Failure} and stop.
\item[\textbf{Q}]
If  there is an $i$ such that $\gamma\deg(a_{i,i})+a_i+\tau+2g-1\le |G|$, then set $f_i^{(s)}=\phi z+\psi$. Set $\frakm_t\in\F$ such that $-\phi^{-1}\psi=\sum_{t\in S,t\le s}\frakm_t\ob_t$, and proceed to the step S3. If $\psi$ is not divisible by $\phi$, then declare \emph{Decoding Failure} and stop. 
\item[\textbf{M}]
\textbf{Pairing.}
For $0\le i<\gam$, let $i'=(i+s)\bmod\gam$, $k_i=\deg(a_{i,i})+(a_i+s-b_{i'})/\gam$,
and $c_i=\deg(d_{i',i'})-k_i$.
\\[2ex]
\textbf{Voting.} 
If $s\notin S$, then for $i$ with $k_i\ge 0$, let
\[
	m_i=-b_{i,i'}[x^{k_i}],\quad \mu_i=1
\]
and for $i$ with $k_i<0$, let $m_i=0, \mu_i=1$. Let $m=0$ in both cases.

If  $s\in S$, then for each $i$, let
\[
	m_i=-\frac{b_{i,i'}[x^{k_i}]}{\mu_i},\quad \mu_i=\LC(a_{i,i}y_i\phib_s)
\]
and let $\bar{c}_i=\max\set{c_i,0}$, and let $m$ be the element of $\F$ with the largest $\sum_{m=m_i}\bar{c}_i$, and let $\frakm_s=m$.
\\[2ex]
\textbf{Rebasing.} 
For each $i$, do the following. If $m_i=m$, then let
\begin{equation*}%\label{fkfmvd}
	\begin{aligned}
	g_{i'}^{(s-1)}&=g_{i'}^{(s)}(z+m\phib_s)\\
	f_i^{(s-1)}&=f_i^{(s)}(z+m\phib_s)
	\end{aligned}
\end{equation*}
and let $\nu_{i'}^{(s-1)}=\nu_{i'}^{(s)}$. 
If $m_i\neq m$ and $c_i>0$, then let
\begin{equation*}%\label{fkmksd}
	\begin{aligned}
	g_{i'}^{(s-1)}&=f_i^{(s)}(z+m\phib_s)\\
	f_i^{(s-1)}&=x^{c_i}f_i^{(s)}(z+m\phib_s)-\frac{\mu_i(m-m_i)}{\nu_{i'}^{(s)}}g_{i'}^{(s)}(z+m\phib_s)
	\end{aligned}	
\end{equation*}
and let $\nu_{i'}^{(s-1)}=\mu_i(m-m_i)$.
If $m_i\neq m$ and $c_i\le 0$, then let
\begin{equation*}%\label{akfjcd}
	\begin{aligned}
	g_{i'}^{(s-1)}&=g_{i'}^{(s)}(z+m\phib_s)\\
	f_i^{(s-1)}&=f_i^{(s)}(z+m\phib_s)-\frac{\mu_i(m-m_i)}{\nu_{i'}^{(s)}}x^{-c_i}g_{i'}^{(s)}(z+m\phib_s)
	\end{aligned}
\end{equation*}
and let $\nu_{i'}^{(s-1)}=\nu_{i'}^{(s)}$. Let $B^{(s-1)}=\set{g_i^{(s-1)},f_i^{(s-1)}\mid 0\le i<\gam}$.
\end{itemize}
\item[\textbf{S3}] Output the codeword $\sum_{s\in S}\frakm_s\res(\phib_s)$.
\end{itemize}
\end{FDA} 
}
\end{center}
\caption{\label{keprer}Fast Decoding Algorithm for Differential AG Codes}
\end{figure}

\section{Performance Analysis}\label{ioqer}

\subsection{Decoding Capacity}

\begin{prop}\label{dkaaa}
We have $\nu(s)=|\Delta(J)\cup\Delta(R\phib_s)|-n+|G|-2g+2-s$. Hence $d_\Omega$ is at least Goppa bound $|G|-2g+2$ for differential AG codes. 
\end{prop}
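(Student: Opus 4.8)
The plan is to rewrite $\nu(s)$ entirely in terms of delta sets and then to evaluate $|\Delta(R\phib_s)|$ by means of the Riemann--Roch theorem. Starting from the identity $\nu(s)=|\Delta(J)\cap\Sigma(R\phib_s)|$ recorded above, I would first note that $\Sigma(R\phib_s)$ and $\Delta(R\phib_s)$ partition the set of monomials of $\Wb$ and that $\Delta(J)$ consists of such monomials, so $\Delta(J)\cap\Sigma(R\phib_s)=\Delta(J)\setminus\Delta(R\phib_s)$; inclusion--exclusion then yields
\[
	\nu(s)=|\Delta(J)\setminus\Delta(R\phib_s)|=|\Delta(J)\cup\Delta(R\phib_s)|-|\Delta(R\phib_s)|.
\]
Everything thus reduces to computing $|\Delta(R\phib_s)|$. (Here $\Delta(R\phib_s)$, like $\Delta(J)$, does not depend on the tie-breaking rule of $>_s$, since $R\phib_s$ and $J$ lie inside $\Wb$ and carry no $z$-component, so the induced order there is simply the order by $\gd$.)

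To evaluate $|\Delta(R\phib_s)|$, observe that for every nonzero $f\in R$ one has $\gd(f\phib_s)=\rho(f)+s$, so the leading monomial of $f\phib_s$ is $\phib_{\rho(f)+s}$; hence $\Sigma(R\phib_s)=\set{\phib_t\mid t\in s+\gL}$, and since $s+\gL\subseteq\Ob$ (because $s\in\Ob$ and $\gL+\Ob=\Ob$) we get $\Delta(R\phib_s)=\set{\phib_t\mid t\in\Ob\setminus(s+\gL)}$. I would count the finite set $\Ob\setminus(s+\gL)$ by comparing $\Ob$ with $s+\gL$ inside a large interval $(-\infty,T]$. On one hand, $\set{\phib_t\mid t\in\Ob,\ t\le T}$ is an $\F$-basis of $\Omega(-D+G-TQ)$, whose dimension equals $n-|G|+T-1+g$ by Theorem \ref{riekdmd} together with the vanishing of $\calL(-D+G-TQ)$ for $T$ large. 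On the other hand, $|(s+\gL)\cap(-\infty,T]|$ is the number of nongaps of $\gL$ not exceeding $T-s$, which is $(T-s)+1-g$ for $T$ large, since $\gL$ has exactly $g$ gaps. Subtracting (legitimate because $s+\gL\subseteq\Ob$) gives $|\Delta(R\phib_s)|=n-|G|+2g-2+s$, and substitution into the display above produces the claimed formula $\nu(s)=|\Delta(J)\cup\Delta(R\phib_s)|-n+|G|-2g+2-s$.

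For the Goppa bound, I would use $\Delta(J)\subseteq\Delta(J)\cup\Delta(R\phib_s)$ together with $|\Delta(J)|=n$, established in Section \ref{kkmcd}, to conclude $|\Delta(J)\cup\Delta(R\phib_s)|\ge n$. Since every $s\in S$ satisfies $s\le 0$, this gives $\nu(s)\ge n-n+|G|-2g+2-s\ge|G|-2g+2$, and taking the minimum over $s\in S$ yields $d_\Omega\ge|G|-2g+2$.

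I expect the computation of $|\Delta(R\phib_s)|$ to be the only real content: one must argue carefully that counting the nongaps of $\Ob$ below a large threshold computes $\dim_\F\Omega(-D+G-TQ)$, then invoke Riemann--Roch and the degree condition forcing the vanishing of $\calL$, and finally match the outcome against the elementary nongap count for $\gL$. By contrast, the inclusion--exclusion step, the concluding Goppa-bound estimate, and the remark that the relevant delta sets are insensitive to the tie-breaking convention are all routine.
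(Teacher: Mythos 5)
Your proof is correct and follows essentially the same route as the paper's: the same inclusion--exclusion step reduces everything to $|\Delta(R\phib_s)|$, which both you and the paper evaluate as $n-|G|+s+2g-2$ via the Riemann--Roch theorem (the paper phrases this as $\dim_\F\Wb/R\phib_s=\dim_\F\Omega(-D+G-(t+s)Q)/\calL(tQ)\phib_s$ for large $t$, which is the same count as your comparison of nongaps of $\Ob$ and of $s+\gL$ below a large threshold). The concluding estimate $|\Delta(J)\cup\Delta(R\phib_s)|\ge n$ and the use of $s\le 0$ are identical to the paper's.
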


\begin{IEEEproof}
Note that
\[
	\nu(s)=|\Delta(J)\cap\Sigma(R\phib_s)|
	=|\Delta(J)\cup\Delta(R\phib_s)|-|\Delta(R\phib_s)|
	=|\Delta(J)\cup\Delta(R\phib_s)|-n+|G|-s-2g+2.
\]
To see the last equality, note that by the Riemann-Roch theorem,
\[
	|\Delta(R\phib_s)|=\dim_\F\Wb/R\phib_s=\dim_\F\Omega(-D+G-(t+s)Q)/\calL(tQ)\phib_s=n-|G|+s+2g-2
\]
for all large enough $t$. Since $|\Delta(J)\cup\Delta(R\phib_s)|\ge n$, we have $\nu(s)\ge|G|-2g+2-s$. Now our assertion follows.
\end{IEEEproof}

Let us define $\tau_M(s)=\fl{(\nu(s)-1)/2}$ for $s\in\Ob$, which is the largest number of errors for which the majority voting succeeds for $s$. Like Proposition 22 in \cite{kwankyu2014}, we can show that for nongap $s\le |G|-2g+2$,
\begin{equation}\label{qphvt}
	\fl{(|G|-2g+1-s)/2}\le\tau_M(s)\le \fl{(|G|-g+1-s)/2}
\end{equation}
and if $s\le |G|-4g+2$, equality holds on the left.

We now find out when the condition in the step Q is satisfied. Let $t=\wt(e)$. 

\begin{thm}\label{ckjkw}
Let $B^{(s)}$ be a Gr\"obner basis of $I_{v^{(s)}}$ with respect to $>_s$. If $\gl_t+s+t+2g-1\le |G|$, then there exists an $f\in B^{(s)}$ such that $f$ is a $Q$-polynomial for $v^{(s)}$.
\end{thm}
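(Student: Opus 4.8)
The plan is to exhibit the desired $Q$-polynomial among the $f_i^{(s)}$ by a counting argument on delta sets. Recall from the discussion preceding the theorem that $|\Delta_s(I_{v^{(s)}})\cap Rz|\le \wt(e)=t$ (Corollary 7 of \cite{kwankyu2014}), and that the leading monomials of the $\gamma$ polynomials $f_0^{(s)},\dots,f_{\gamma-1}^{(s)}$ lie in $Rz$, one in each residue class mod $\gamma$. Writing $\LM_s(f_i^{(s)})=x^{\deg(a_{i,i})}y_iz$, the quantity $\sum_{0\le i<\gamma}\deg(a_{i,i})$ equals $|\Sigma_s(I_{v^{(s)}})\cap Rz|$ counted up to the ``staircase'', so that the number of $Rz$-monomials below these leading terms is exactly $|\Delta_s(I_{v^{(s)}})\cap Rz|\le t$. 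Hence one of the $f_i^{(s)}$, say $f_j^{(s)}$, must have $\gamma\deg(a_{j,j})+a_j \le \gamma t$ in the sense that its weighted $x$-degree contribution is small; more precisely, the minimum over $i$ of $\gd_0(\LM_0(f_i^{(s)})) = \gamma\deg(a_{i,i})+a_i$ is at most the $t$-th smallest nongap, which is $\gl_t$.

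First I would make precise the claim that $\min_i\big(\gamma\deg(a_{i,i})+a_i\big)\le \gl_t$. The monomials of $R$ with $\rho$-value strictly less than the smallest leading monomial's $\rho$-value all lie in $\Delta_s(I_{v^{(s)}})\cap Rz$; there are exactly $\min_i(\gamma\deg(a_{i,i})+a_i)$ of them when one counts correctly using the Ap\'ery structure. Since this number is $\le |\Delta_s(I_{v^{(s)}})\cap Rz|\le t$, and since $\gl_t$ is by definition the $(t{+}1)$-st element of $\gL$ (so there are exactly $t$ nongaps below it), we conclude $\min_i(\gamma\deg(a_{j,j})+a_j)\le \gl_t$ for the minimizing index $j$. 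Next, I would compute $\gd_s(f_j^{(s)})$: since the leading term of $f_j^{(s)}$ is in $Rz$ and equals $x^{\deg(a_{j,j})}y_jz$, we have $\gd_s(f_j^{(s)})=\gamma\deg(a_{j,j})+a_j+s\le \gl_t+s$. Finally, feeding the hypothesis $\gl_t+s+t+2g-1\le |G|$ into this gives $\gd_s(f_j^{(s)})+t+2g-1\le |G|$, i.e. $\gd_s(f_j^{(s)})+\wt(e)+2g-1\le|G|$, which is exactly the defining condition for $f_j^{(s)}$ to be a $Q$-polynomial for $v^{(s)}$ (by the definition following Theorem \ref{lalkaa}).

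The main obstacle I anticipate is the bookkeeping in the first step: justifying cleanly that the number of $Rz$-monomials lying strictly below all $\gamma$ leading monomials of the $f_i^{(s)}$ is simultaneously equal to $\min_i(\gamma\deg(a_{i,i})+a_i)$ and bounded by $|\Delta_s(I_{v^{(s)}})\cap Rz|$. This requires care with the Ap\'ery system — the leading monomials sit in distinct residue classes mod $\gamma$, so below a threshold $\gl$ the count of monomials $x^ky_iz$ with $a_i+\gamma k<\gl$ is a sum of contributions per class, and one must check that the minimizing class controls this. This is entirely parallel to the corresponding estimate in \cite{kwankyu2014} (the analogue of Corollary 7 and the surrounding discussion), so I would either cite it directly or reproduce the short counting verbatim with the obvious change from $\Rbar$ to $\Wb$. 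Everything after that first step is a one-line substitution.
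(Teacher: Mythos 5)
Your argument is correct and is essentially the paper's proof: the paper simply cites Lemma 18 of \cite{kwankyu2014} for the key fact that some $f\in B^{(s)}$ satisfies $\gd_s(f)\le\gl_t+s$, which is exactly the delta-set counting you carry out before the final one-line substitution. One small slip in your bookkeeping: the number of monomials of $Rz$ with $\rho$-value below $m=\min_i(\gamma\deg(a_{i,i})+a_i)$ is the number of \emph{nongaps} below $m$, not $m$ itself (the $g$ gaps are missing), but since that count is $\le|\Delta_s(I_{v^{(s)}})\cap Rz|\le t$ and $\gl_t$ is the nongap with exactly $t$ nongaps below it, your conclusion $m\le\gl_t$ still follows.
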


\begin{proof}
By the same argument of the proof of Lemma 18 of \cite{kwankyu2014}, we know that there exists an $f\in B^{(s)}$ such that $\gd_s(f)\le\gl_t+s$. So if $\gl_t+s+t+2g-1\le |G|$, then this $f$ satisfies the condition to be a $Q$-polynomial.
\end{proof}

Let us define
\[
	\tau_Q(s)=\max\set{t\mid \gl_t+s+t+2g-1\le |G|}
\]
for $s\in\Ob$. By Theorem \ref{ckjkw}, the value $\tau_Q(s)$ is the largest number of errors for which a $Q$-polynomial exists in $B^{(s)}$. Like Proposition 20 in \cite{kwankyu2014}, we can then show that for $s\in\Ob$,
\begin{equation}\label{qprot}
	\fl{(|G|-3g+1-s)/2}\le \tau_Q(s)\le\fl{(|G|-2g+1-s)/2}
\end{equation}
and for $s\le |G|-5g+1$, equality holds on the left. From \eqref{qphvt} and \eqref{qprot}, we see that $\tau_Q(s)\le\tau_M(s)$ for nongap $s\le|G|-2g+2$, and moreover for $s\le|G|-5g+1$,
\[
	\tau_M(s)-\tau_Q(s)=\left\{\begin{array}{ll} 
	\ceil{g/2} & \text{if $|G|-s$ is odd,} \\ 
	\fl{g/2} & \text{if $|G|-s$ is even.} \end{array}\right.
\]

Recall that the actual condition used in the step Q to find a $Q$-polynomial in $B^{(s)}$ is $\gd_s(f)+\tau+2g-1\le |G|$. By a similar proof of Theorem \ref{ckjkw}, we have

\begin{thm}\label{casksq}
Suppose $t=\wt(e)\le\tau$. If $\gl_t+s+\tau+2g-1\le |G|$, then  there exists an $f\in B^{(s)}$ satisfying $\gd_s(f)+\tau+2g-1\le |G|$, and $f$ is a $Q$-polynomial for $v^{(s)}$. 
\end{thm}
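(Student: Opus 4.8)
The plan is to reproduce the proof of Theorem~\ref{ckjkw} essentially verbatim, the only change being that the exponent $t=\wt(e)$ gets replaced by the larger bound $\tau$ in the final inequality, which is legitimate precisely because we now assume $t\le\tau$.

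First I would invoke the differential analogue of Lemma~18 of \cite{kwankyu2014} --- the very same input already used in the proof of Theorem~\ref{ckjkw} --- which, given the Gr\"obner basis $B^{(s)}$ of $I_{v^{(s)}}$ with respect to $>_s$, produces an element $f\in B^{(s)}$ whose leading term lies in $Rz$ and which satisfies
\[
	\gd_s(f)\le\gl_t+s,\qquad t=\wt(e).
\]
The reasoning behind this bound is identical to the evaluation-code case: it rests on the fact that $|\Delta_s(I_{v^{(s)}})\cap Rz|\le\wt(e)$ (Corollary~7 of \cite{kwankyu2014}), which limits how large the $x$-degrees of the $z$-leading basis polynomials can be, and I would simply transcribe it with the cosmetic change $\Rbar\to\Wb$ and the other notational substitutions of Section~\ref{kkmcd}.

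The theorem then follows in two short steps. Feeding the hypothesis $\gl_t+s+\tau+2g-1\le|G|$ into the bound above gives
\[
	\gd_s(f)+\tau+2g-1\le\gl_t+s+\tau+2g-1\le|G|,
\]
which is exactly the inequality tested in the step~\textbf{Q} of the Fast Decoding Algorithm, so $f$ is the claimed basis element. Since moreover $\wt(e)=t\le\tau$, the same chain yields $\gd_s(f)+\wt(e)+2g-1\le|G|$, so $f$ satisfies the defining condition of a $Q$-polynomial for $v^{(s)}$; applying Theorem~\ref{lalkaa} with $\mu=\mu^{(s)}$ --- valid because $\mu^{(s)}\in\Omega(-D+G-sQ)$ and $\wt(v^{(s)}-\res(\mu^{(s)}))=\wt(e)$ --- then gives $f(\mu^{(s)})=0$, confirming that $f$ may legitimately be used for decoding at stage $s$.

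I do not anticipate any real obstacle: once the bound $\gd_s(f)\le\gl_t+s$ is quoted, the rest is a two-line manipulation, and the threshold swap $t\mapsto\tau$ goes in the safe direction, since enlarging the subtracted amount only shrinks the weighted-degree budget $|G|-\tau-2g+1$, which is exactly the quantity the (now stronger) hypothesis controls. The only point needing a moment's care is the bookkeeping one of checking that the $f$ supplied by the Lemma~18 argument is one of the polynomials $f_i^{(s)}$ of $B^{(s)}$ that the step~\textbf{Q} actually inspects, which it is.
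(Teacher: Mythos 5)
Your proposal is correct and follows exactly the route the paper intends: it reruns the proof of Theorem~\ref{ckjkw} (via the Lemma~18 argument of \cite{kwankyu2014} giving $\gd_s(f)\le\gl_t+s$) and then uses the hypothesis together with $t\le\tau$ to obtain both the step~\textbf{Q} inequality and the $Q$-polynomial condition. This matches the paper, which states the result "by a similar proof of Theorem~\ref{ckjkw}" without further elaboration.
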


By Theorem \ref{casksq}, the condition in the step Q is satisfied for some $s\ge s_Q(t)=|G|-\gl_t-\tau-2g+1$ depending on $t=\wt(e)$, and at the latest for some $s\ge s_Q(\tau)=|G|-\gl_\tau-\tau-2g+1$. Finally note that
\begin{equation}\label{cwkfda}
	\tau=\min_{s\in S}\tau_M(s),\quad s_Q(\tau)=\max\set{s\mid \tau_Q(s)\ge\tau},
\end{equation}
which can be verified by definitions.
\subsection{Complexity}

The Fast Decoding Algorithm iteratively updates a $2\gam\times 2\gam$ array of polynomials in $\F[x]$ that represents $B^{(s)}$. Each of the $2\gam$ rows of the array are again viewed as pairs of vectors in $\F[x]^\gam$. For the initialization step S1, we precompute $h_i$ for $1\le i\le n$ and $\eta_i$ for $0\le i<\gam$ in the vector form. In the \textit{Rebasing} substep of the step M, the most intensive computation is the substitution of $z$ with $z+m\phib_s$. As $\phib_s$ is in the form $x^k\ob_i$, the computation is facilitated if $y_i\ob_j$ for $0\le i,j<\gam$ is precomputed in the vector form. For the step S3, it is necessary to precompute the vectors $\res(\phib_{s_i})$ in $\F^n$ for $0\le i\le k-1$, essentially the generator matrix of the code $C$. Our complexity analysis is now summarized, omitting the details, in the following.

\begin{prop}
(1) Lagrange basis polynomial $h_i$ can be chosen such that the maximum degree of the polynomials in the vector form of $h_i$ is bounded by $N_{h}=\fl{(n+2g-1)/\gam}$.

(2) The maximum degree of the polynomials in the vector form of $\eta_i$ is bounded by $N_{\eta}=\fl{(n+3g+\gamma-1)/\gam}$.

(3) The maximum degree of the polynomials in the  $2\gam\times2\gam$ array during an execution is bounded by $N_{\mathrm{deg}}=1+\fl{(n+4g-2)/\gam}$ if $g>0$. If $g=0$, then it is bounded by $n$. 

(4) The number of iterations is at most $N_{\mathrm{iter}}=n+2g$.
\end{prop}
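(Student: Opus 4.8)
The plan is to derive all four bounds from the Riemann--Roch theorem applied to the spaces defining $\Wb$ and $J$, together with the explicit weighted degrees $\gd_s(x^ky_iz)=\gamma k+a_i+s$ and $\gd(x^k\ob_i)=\gamma k+b_i$. Three estimates are the key inputs. (i) Since $\Omega(-D+G-sQ)=0$ whenever $\deg(-D+G-sQ)=|G|-n-s\ge 2g-1$, every nonzero differential in $\Wb$ has $\gd$ at least $|G|-n-2g+2$, so $\min\Ob\ge|G|-n-2g+2$. (ii) Because $J=\bigcup_s\Omega(G-sQ)$ and $\dim\Omega(G-sQ)=s-|G|+g-1$ for $s>|G|$, the set $\gd(J)$ contains every integer $\ge|G|+2$; hence the least element of $\gd(J)$ in each residue class modulo $\gamma$ --- which is $\gd(\eta_i)$ --- is at most $|G|+\gamma+1$. (iii) The map $\res$ is already onto $\F^n$ on $\Omega(-D+G-(|G|+1)Q)$, since that space has dimension $n+g$ while its kernel $\Omega(G-(|G|+1)Q)$ has dimension $g$; so each $h_i$ may be chosen inside $\Omega(-D+G-(|G|+1)Q)$, giving $\gd(h_i)\le|G|+1$ and hence $N=\gd(h_v)\le|G|+1$.

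Items (1), (2) and (4) follow from these directly. Writing the chosen $h_i=\sum_j\beta_j\ob_j$, each monomial $x^{\deg\beta_j}\ob_j$ occurring satisfies $\gamma\deg\beta_j+b_j\le\gd(h_i)\le|G|+1$, so $\deg\beta_j\le(|G|+1-\min\Ob)/\gamma\le(n+2g-1)/\gamma$, which gives $N_h$. Similarly, with $\LM(\eta_i)=x^{k_i}\ob_i$ one has $\gd(\eta_i)=\gamma k_i+b_i\le|G|+\gamma+1$, and every monomial $x^e\ob_j$ occurring in $\eta_i$ has $\gamma e+b_j\le\gd(\eta_i)$, so $e\le(|G|+\gamma+1-\min\Ob)/\gamma$, which is bounded by $N_\eta$. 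For (4), the loop of step S2 runs over $s=N,N-1,\dots,s_0$ with $N\le|G|+1$ and $s_0\ge\min(\Ob\cap\Z_{\le0})\ge|G|-n-2g+2$, so there are at most $N-s_0+1\le n+2g$ iterations.

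Item (3) is the real content. The plan is to control
\[
	M(s)=\max_{0\le i<\gamma}\max\set{\gd_s(g_i^{(s)}),\,\gd_s(f_i^{(s)})}
\]
uniformly in $s$ and then pass from a bound on $M(s)$ to bounds on the array entries. At the start, $\gd_N(\eta_i)\le|G|+\gamma+1$ and $\gd_N(y_i(z-h_v))=a_i+N\le(2g-1+\gamma)+(|G|+1)$, where $a_i\le 2g-1+\gamma$ because the Frobenius number of $\gL$ is at most $2g-1$; hence $M(N)\le|G|+2g+\gamma$. Next I would check that $M(s)$ does not increase as $s$ decreases. In the Rebasing step the substitution $z\mapsto z+m\phib_s$, re-measured with $>_{s-1}$, does not raise the weighted degree: the new $\ob$-terms arising from $a_{i,j}\,y_j\phib_s$ have $\gd$ equal to $\gamma\deg a_{i,j}+a_j+s$, which is at most the current $\gd_s$, while the surviving $z$-terms lose one unit of weight. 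Moreover, since $x^{\deg d_{i',i'}}\ob_{i'}$ and $x^{\deg a_{i,i}}y_iz$ are the leading monomials of $g_{i'}^{(s)}$ and $f_i^{(s)}$, the definitions of $k_i$ and $c_i$ give the identity $\gamma c_i=\gd_s(g_{i'}^{(s)})-\gd_s(f_i^{(s)})$, so $\gd_{s-1}(x^{c_i}f_i^{(s)}(z+m\phib_s))\le\gd_s(g_{i'}^{(s)})$. Running through the three cases of the update one finds that every polynomial of $B^{(s-1)}$ has $\gd_{s-1}$ at most $\max\set{\gd_s(g_{i'}^{(s)}),\gd_s(f_i^{(s)})}$, hence $M(s-1)\le M(s)$ and so $M(s)\le|G|+2g+\gamma$ throughout. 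Finally, any monomial $x^ey_jz$ or $x^e\ob_j$ appearing in an entry at stage $s$ has $\gamma e+a_j+s$ resp. $\gamma e+b_j$ at most $\gd_s(\cdot)\le M(s)$, and since $a_j\ge0$, $s\ge s_0\ge\min\Ob$ and $b_j\ge\min\Ob$, we get $e\le(M(s)-\min\Ob)/\gamma\le(|G|+2g+\gamma-(|G|-n-2g+2))/\gamma=(n+4g+\gamma-2)/\gamma$, so $e\le\fl{(n+4g+\gamma-2)/\gamma}=1+\fl{(n+4g-2)/\gamma}$; the spare unit also absorbs the discrepancy between $\gd_s$ and $\gd_{s-1}$ in the intermediate polynomials. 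When $g=0$ one has $\gamma=1$, $a_i=0$, $M(N)\le|G|+2$ and $\min\Ob=|G|-n+2$, so $e\le n$.

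The main obstacle is the monotonicity $M(s-1)\le M(s)$ in item (3): it has to be verified case by case that none of the operations in the Rebasing step --- the change of variable $z\mapsto z+m\phib_s$, the multiplication by $x^{c_i}$, and the cancellation against a scaled copy of $g_{i'}^{(s)}$ --- pushes any weighted degree past the running maximum, and the identity $\gamma c_i=\gd_s(g_{i'}^{(s)})-\gd_s(f_i^{(s)})$ is exactly what makes the $x^{c_i}$-scaling land on $\gd_s(g_{i'}^{(s)})$. This is precisely the argument underlying the complexity analysis of \cite{kwankyu2014}, so it transcribes with only a change of notation; the genuinely new ingredients are the Riemann--Roch bounds $\min\Ob\ge|G|-n-2g+2$ and $\gd(\eta_i)\le|G|+\gamma+1$, which supply the constants in $N_h$, $N_\eta$, $N_{\mathrm{deg}}$ and $N_{\mathrm{iter}}$.
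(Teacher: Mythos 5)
Your proof is correct and follows exactly the route the paper intends: the paper states this proposition with the details omitted, deferring to the complexity analysis of \cite{kwankyu2014}, and your Riemann--Roch estimates $\min\Ob\ge|G|-n-2g+2$, $\gd(\eta_i)\le|G|+\gam+1$ and $\gd(h_i)\le|G|+1$, combined with the monotonicity of the maximal weighted degree under Rebasing via the identity $\gam c_i=\gd_s(g_{i'}^{(s)})-\gd_s(f_i^{(s)})$, are precisely the ingredients needed to transcribe that analysis to $Rz\oplus\Wb$. (Incidentally, your argument for (2) yields $\fl{(n+2g+\gam-1)/\gam}$, slightly sharper than the stated $N_\eta$, which of course still establishes the claim.)
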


\begin{prop}
If $g>0$, an execution of the Fast Decoding Algorithm takes $O((n+4g+\gam)(2\tau+3g)(2g+\gam)\gam)$ multiplications. For $g=0$, it takes $O(n^2)$ multiplications. The implicit constant is absolute.
\end{prop}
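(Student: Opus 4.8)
The plan is to write the running time as the cost of the initialization step~S1 plus (the number of iterations of the loop~S2) $\times$ (the cost of one iteration), bound each factor, and treat $g=0$ separately since there $\gamma=1$ and the precomputed products $y_i\ob_j$ degenerate to scalars.

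First I would bound the number of iterations by $O(\tau+g)$. The loop runs over $s$ from $N=\gd(h_v)$ downward to $s_0=\min S$. Choosing each Lagrange differential $h_i$ in $\Omega(-D+G-(|G|+1)Q)$ --- which is possible because $\res$ maps that space onto $\F^n$ by the Riemann--Roch computation already quoted --- gives $\gd(h_i)\le|G|+1$, so $N=\gd(\sum_iv_ih_i)\le|G|+1$ since $\gd$ is valuative. On the other end, I claim the loop halts (through step~Q, through the failure test~F, or simply by reaching $s_0$) no later than $s=s_Q(\tau)=|G|-\gl_\tau-\tau-2g+1$: for if~F has not already fired at that value of $s$ and $s_0\le s_Q(\tau)$, then $\sum_i\deg(a_{i,i})=|\Delta_s(I_{v^{(s)}})\cap Rz|\le\tau$, whence the Gr\"obner-basis argument behind Theorem~\ref{ckjkw}, in its sharp form, furnishes an $f\in B^{(s)}$ with $\gd_s(f)\le\gl_{|\Delta_s(I_{v^{(s)}})\cap Rz|}+s\le\gl_\tau+s$, which at $s=s_Q(\tau)$ is exactly the step-Q threshold $\gd_s(f)+\tau+2g-1\le|G|$. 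Since there are at most $g$ gaps below $\gl_\tau$ we have $\gl_\tau\le\tau+g$, so in all cases $N-\max(s_0,s_Q(\tau))+1\le N-s_Q(\tau)+1\le 2\tau+3g+O(1)$.

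Next I would bound the cost of one iteration. Pairing and Voting touch only leading terms, exponents and $\F$-valued tallies, so they cost $O(\gamma)$ each, and the shift by $x^{c_i}$ and the single $\F[x]$-linear combination of two rows in Rebasing cost only $O(\gamma N_{\mathrm{deg}})$. The bottleneck is the substitution $z\mapsto z+m\phib_s$ applied to the $2\gamma$ rows of the $2\gamma\times2\gamma$ array: writing $\phib_s=x^k\ob_{i^*}$, it sends a row $\sum_ja_jy_jz+\sum_jb_j\ob_j$ to $\sum_ja_jy_jz+\bigl(\sum_jb_j\ob_j+mx^k\sum_ja_j(y_j\ob_{i^*})\bigr)$, and with the $y_j\ob_{i^*}$ held in precomputed vector form this amounts to $\gamma^2$ products of a polynomial of degree $\le N_{\mathrm{deg}}$ by one of degree $\le D$, where $D$ is the largest degree occurring in the vector form of any $y_i\ob_j$. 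Hence one iteration costs $O(\gamma^3(N_{\mathrm{deg}}+1)(D+1))$ multiplications. To bound $D$, note that the $\ob_\ell$-coordinate of $y_i\ob_j$ has degree at most $(a_i+b_j-b_\ell)/\gamma$; since $\max_ia_i\le 2g+\gamma-1$ (the Ap\'ery set of $\gL$, whose Frobenius number is $\le2g-1$) and, from the Riemann--Roch bounds on which $s$ satisfy $\Omega(-D+G-sQ)=0$ and which satisfy $\res(\Omega(-D+G-sQ))=\F^n$, the Ap\'ery set of $\Ob$ obeys $\max_jb_j-\min_jb_j\le2g+\gamma-2$, we get $\gamma(D+1)=O(2g+\gamma)$; with $\gamma(N_{\mathrm{deg}}+1)=O(n+4g+\gamma)$ from part~(3) of the preceding proposition, one iteration costs $O\bigl((n+4g+\gamma)(2g+\gamma)\gamma\bigr)$.

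Multiplying the $O(\tau+g)$ iterations by this figure gives $O\bigl((n+4g+\gamma)(2\tau+3g)(2g+\gamma)\gamma\bigr)$; the initialization (forming $h_v$, the $\gamma$ products $y_ih_v$, and in~S3 the linear combination of the rows $\res(\phib_s)$) is bounded routinely and does not raise the order --- it is $O(n^2)$ in the worst case, which is exactly the figure reported for $g=0$, where moreover $N_{\mathrm{deg}}\le n$, $D=0$, and the loop runs $O(\tau)$ times so that the loop itself costs only $O(n\tau)=O(n^2)$. The hard part will be the second paragraph: showing the loop really halts within $O(\tau+g)$ rounds in every case --- in particular when $\wt(e)>\tau$, where the operative exit is the failure test~F rather than step~Q --- together with the structural estimate $\max_jb_j-\min_jb_j=O(g+\gamma)$ for the Ap\'ery set of the numerical $\gL$-module $\Ob$, which is what converts the precomputed products into the $(2g+\gamma)$ factor.
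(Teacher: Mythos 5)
The paper gives no proof of this proposition at all --- the text explicitly omits the details and defers to \cite{kwankyu2014} --- so your reconstruction has to be judged on its own terms. Its core is sound and is surely the intended argument: the decomposition into (number of iterations) $\times$ (cost of the substitution $z\mapsto z+m\phib_s$); the bound $N-s_Q(\tau)+1\le 2\tau+3g+O(1)$ on the iteration count, obtained by showing that at $s=s_Q(\tau)$ either the failure test F has already fired or else $|\Delta_s(I_{v^{(s)}})\cap Rz|\le\tau$ and the Gr\"obner-basis argument behind Theorem \ref{ckjkw} forces step Q to fire (here you correctly read the step-Q threshold as $\gd_s(f)+\tau+2g-1\le|G|$ with $\gd_s(f)=\gam\deg(a_{i,i})+a_i+s$, i.e.\ including the $+s$ that the printed condition in Figure \ref{keprer} drops); and the conversion of the per-iteration substitution cost into $O\bigl((n+4g+\gam)(2g+\gam)\gam\bigr)$ via $\gam(N_{\mathrm{deg}}+1)=O(n+4g+\gam)$ together with the Ap\'ery-window estimates $\max_i a_i\le 2g+\gam-1$ and $\max_j b_j-\min_j b_j\le 2g+\gam-1$ (your $-2$ is a harmless off-by-one: $\Ob$ contains every $s\ge|G|-n+2$ and nothing below $|G|-n-2g+2$). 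This also correctly explains why the complexity carries the factor $2\tau+3g$ rather than the $N_{\mathrm{iter}}=n+2g$ of part (4) of the preceding proposition, which is the genuinely subtle point.

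The gap is the last sentence of your argument, where you dismiss steps S1 and S3. Forming $h_v=\sum_{i=1}^n v_ih_i$ from the precomputed $h_i$ costs $\Theta\bigl(n\cdot\gam(N_h+1)\bigr)=\Theta\bigl(n(n+2g+\gam)\bigr)$ multiplications, and S3 costs $\Theta(kn)$; your claim that this ``does not raise the order'' is unproven and is in fact false in some parameter regimes. For instance, on an elliptic curve with $n$ large, $\gam=O(1)$ and $|G|=O(1)$ one has $\tau=O(1)$, so the asserted bound $(n+4g+\gam)(2\tau+3g)(2g+\gam)\gam$ is $O(n)$ while S1 alone is $\Theta(n^2)$. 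To close this you must either prove that the main term always dominates $n(n+2g)$ (it does not), or state explicitly that the proposition counts only the iterative step S2 --- which is evidently the convention the paper inherits from \cite{kwankyu2014}, but which a proof of the statement ``an execution of the Fast Decoding Algorithm takes $O(\cdots)$ multiplications'' cannot silently assume. Everything else checks out.
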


Observe that these results are exactly the same with the complexity analysis of the decoding algorithm for evaluation AG codes reported in \cite{kwankyu2014}.

\subsection{Comparisons of minimum distance bounds}\label{dkwdf}

We now show that $d_\Omega$ is indeed a lower bound for the minimum distance of the code $C_\Omega(D,G)$. Recall that $d_\Omega=\min\set{\nu(s)\mid s\in S}$ where $\nu(s)=|\Delta(J)\cap\Sigma(R\phib_s)|=n-|\Delta(J)\cap\Delta(R\phib_s)|$.

\begin{thm}\label{xmxmzd}
The minimum distance $d$ of $C_\Omega(D,G)$ is lower-bounded by $d_\Omega$.
\end{thm}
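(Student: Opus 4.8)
The plan is to mimic the standard Goppa-bound argument by producing, for a putative nonzero codeword $c = \res(\omega)$ of small weight with $\omega\in\Omega(-D+G)$, a polynomial in the interpolation module whose existence is incompatible with $\wt(c)$ being too small. First I would take any nonzero $\omega\in\Omega(-D+G)$ with $c=\res(\omega)\neq 0$, write $s=\gd(\omega)\le 0$, and observe that $\omega$ may be assumed to lie in $R\phib_s$ (since $\phib_s$ generates the relevant rank-one piece up to $\F[x]$-multiples, $\omega = f\phib_s$ for some $f\in R$ with $\rho(f)+b_{s\bmod\gam}\le$ the appropriate bound). The key object is the support: let $e$ be the indicator of $\{i : c_i\neq 0\}$, so $\wt(e)=\wt(c)=:t$. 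Because $c$ is a codeword, $\res(\omega)$ vanishes off this support, which is exactly the statement that $z\cdot 0 + \omega \in I_{e}$ fails only because $\res_{P_i}(\omega)\neq 0$ on the support — so instead I would work directly with $J$ and $R\phib_s$: the differential $\omega\in J$ would force $c=0$, so $\omega\notin J$, i.e. $\gd(\omega)\in\Ob$ but the monomial $\phib_{\gd(\omega)}$ has leading term \emph{outside} the span of $\{\LT(\eta_i)\}$ along the support constraints.

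The cleaner route, which I would actually carry out, is via the delta-set identity $\nu(s) = n - |\Delta(J)\cap\Delta(R\phib_s)|$ already recorded before the statement. Fix the codeword $c=\res(\omega)$, $\omega = f\phib_s$ nonzero, $s=\gd(\omega)$, $t=\wt(c)$. I would show $|\Delta(J)\cap\Delta(R\phib_s)| \ge n - t$, which immediately gives $\nu(s)\le t$, hence $t\ge\nu(s)\ge d_\Omega$, and since $s\in S$ can be arranged (or $\nu$ extended monotonically), $d\ge d_\Omega$. To get $|\Delta(J)\cap\Delta(R\phib_s)|\ge n-t$: consider the $\F$-linear map $\Wb/J \to \F^n$ induced by $\res$; it is injective with image $\F^n$ of dimension $n=|\Delta(J)|$, and restricting to $R\phib_s$, the image $\res(R\phib_s)$ is contained in the span of the error-free coordinates of any vector... more precisely, since $c=\res(\omega)$ has support of size $t$, the evaluation $\omega' \mapsto \res(\omega')$ on $R\phib_s$ composed with projection onto the $n-t$ zero-coordinates of $c$ kills $\omega$ but the monomials $\phib_{s'}\in R\phib_s$ with $s'$ ranging over $\Sigma(R\phib_s)\cap\Sigma(J)$ map to linearly independent vectors there; counting dimensions forces at least $n-t$ of the $n$ standard-basis directions to be "covered," i.e. $|\Sigma(J)\cap\Sigma(R\phib_s)|=\nu(s)\le t$.

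The main obstacle is making the last dimension count rigorous: one must argue that a single nonzero $\omega\in R\phib_s$ with $t$ nonzero residues cannot "use up" more than $t$ of the $n$ independent residue directions available in $\Wb/J$, equivalently that $\dim_\F\big(\res(R\phib_s)\big) \ge \nu(s)$ while $\res(R\phib_s)$ meets the weight-$\le t$ coordinate subspace nontrivially only if $\nu(s)\le t$. I expect this to follow from the Riemann–Roch computation of $|\Delta(R\phib_s)|$ together with the observation that $\res$ restricted to $R\phib_s$ has kernel $J\cap R\phib_s$ and that $\Delta(J)\cap\Sigma(R\phib_s)$ indexes a basis of $\res(R\phib_s)$ inside $\F^n$ — a nonzero element of this image with only $t$ nonzero coordinates then lies in a coordinate subspace of dimension $t$, forcing $\nu(s) = \dim\res(R\phib_s) \le t$ once one checks the indexing monomials are independent modulo $J$. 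This is the differential analogue of Theorem 5 (the evaluation minimum-distance bound) in \cite{kwankyu2014}, and I would refer there for the verbatim linear-algebra details after setting up the dictionary $\Rbar\leftrightarrow\Wb$, $\ev\leftrightarrow\res$, $\phi_s\leftrightarrow\phib_s$.
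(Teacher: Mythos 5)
Your overall strategy---reduce to showing $\wt(c)\ge\nu(s)=n-|\Delta(J)\cap\Delta(R\phib_s)|$ for $s=\gd(\mu)$ via a dimension count on residue images---is the same one the paper uses, but as written the argument has a genuine gap at exactly the point you flag as ``the main obstacle,'' and it is not closed by the Riemann--Roch computation you invoke. First, a false simplification: a nonzero $\omega\in\Omega(-D+G)$ with $\gd(\omega)=s$ need \emph{not} lie in $R\phib_s$ when $\gamma>1$ (e.g.\ $\omega=\phib_s+\phib_{s'}$ with $s'<s$, $s'\not\equiv s\pmod\gamma$ and $s-s'\notin\Lambda$ is not of the form $f\phib_s$ with $f\in R$). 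Fortunately you do not need this; all that is needed is $\LM(\omega)=\phib_s$, hence $\Sigma(R\omega)=\Sigma(R\phib_s)$. Second, and more seriously, you bound the wrong module: $\res(R\phib_s)$ is \emph{not} contained in the coordinate subspace supported on $\mathrm{supp}(c)$, and the fact that the single vector $c=\res(\omega)$ lies in a $t$-dimensional coordinate subspace bounds nothing about $\dim_\F\res(R\phib_s)$. The module that must be used is $R\omega$, and the one identity that makes the count close---which never appears in your proposal---is the multiplicativity $\res(f\omega)=\ev(f)\ast\res(\omega)$ for $f\in R$, where $\ast$ is the component-wise product on $\F^n$. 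From it, $\res(R\omega)=\ev(R)\ast c=\F^n\ast c$ is exactly the coordinate subspace supported on $\mathrm{supp}(c)$, of dimension exactly $t=\wt(c)$; on the other hand $\dim_\F\res(R\omega)=\dim_\F(R\omega+J)/J\ge|\Delta(J)\cap\Sigma(R\omega)|=|\Delta(J)\cap\Sigma(R\phib_s)|=\nu(s)$, since the monomials of $\Delta(J)$ are independent in $\Wb/J$ and every monomial of $\Delta(J)\cap\Sigma(R\omega)$ is the leading monomial of some element of $R\omega$. Together these give $\nu(s)\le t$, which is the whole theorem.

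This is precisely the paper's proof in different clothing: the paper packages the identity $\res(f\mu)=\ev(f)\ast c$ as the multiplication-by-$\mu$ map $R/\tilde J\to\Wb/J$ (with $\tilde J=\ker\ev$), computes its kernel as the annihilator of $c$ of dimension $n-\wt(c)$, and reads off $\wt(c)=\dim_\F(J+R\mu)/J=n-|\Delta(J+R\mu)|\ge n-|\Delta(J)\cap\Delta(R\phib_s)|$. Two smaller corrections: $\nu(s)$ is $|\Delta(J)\cap\Sigma(R\phib_s)|$, not $|\Sigma(J)\cap\Sigma(R\phib_s)|$; and to guarantee $s\in S$ (so that $\nu(s)\ge d_\Omega$) you should choose the representative $\mu=\sum_{t\in S}a_t\phib_t$ of the codeword $c$ in the span of $\set{\phib_t\mid t\in S}$ and take $s$ to be the largest $t$ with $a_t\neq 0$, rather than appealing to an unproved monotonicity of $\nu$.
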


\begin{proof}
Let $s\in S$ and suppose $c=\res(\mu)$, $\mu=\sum_{t\in S,t\le s}a_t\phib_t\in\Omega(-D+G)$ with nonzero $a_s\in\F$. Consider $\F^n$ as an $\F$-algebra with the component-wise multiplication $\ast$. Let us consider the evaluation map from $R$ to $\F^n$, which is a surjective homomorphism of $\F$-algebras . Let $\tilde J$ be the kernel of the map, and we have an isomorphism $R/\tilde J$ with $\F^n$. Then
\[
	\begin{split}
	n-\wt(c)&=\dim_\F\set{v\in\F^n\mid v\ast c=0}\\
	&=\dim_\F\set{f\in R/\tilde J\mid \res(f\mu)=0}\\
	&=\dim_\F\set{f\in R/\tilde J\mid f\mu\in J}\\
	&=\dim_\F\ker(R/\tilde J\overset{\mu}{\longrightarrow}{\Wb/J})\\
	&=n-\dim_\F\im(R/\tilde J\overset{\mu}{\longrightarrow}{\Wb/J})\\
	&=n-\dim_\F(J+R\mu)/J\\
	&=\dim_\F\Wb/(J+R\mu)\\	
	&=|\Delta(J+R\mu)|
	\end{split}
\]
Hence $\wt(c)=n-|\Delta(J+R\mu)|$. Now as $\Sigma(R\mu)=\Sigma(R\phib_s)$, we have $|\Delta(J+R\mu)|\le |\Delta(J)\cap\Delta(R\phib_s)|$. Therefore $\wt(c)\ge n-|\Delta(J)\cap\Delta(R\phib_s)|=\nu(s)\ge d_\Omega$. This result implies that $d\ge d_\Omega$.
\end{proof}

Recall that
\[
	\begin{aligned}
	\Lambda &=\set{i\mid\calL(iQ)\neq\calL(i-1)Q},\\
	\Lb &= \set{s\mid\calL(G+sQ)\neq\calL(G+(s-1)Q},\\
	\Ob &= \set{s\mid\Omega(-D+G-sQ)\neq\Omega(-D+G-(s-1)Q)},
	\end{aligned}
\]
where $\Lb$ is the numerical $\Lambda$-module used in \cite{kwankyu2014}. Then Theorem 33 in \cite{kwankyu2014} says that the minimum distance of $C_\calL(D,G)$ is lower bounded by
\[
	d_\calL=\min_{s\in\Lb,s\le 0}|\set{(i,j)\mid i\in\Lambda,j\in\Ob,i+j+s=1}|.
\]
For comparison of $d_\Omega$ with $d_\calL$, we assume $\deg(G)\ge 2g-1$ such that $S=\set{s\in\Lb,s\le 0}$ from now on. Like Lemma 32 of \cite{kwankyu2014}, we can show that $a+1\in\Lb$ if and only if $-a\notin\Ob$ or $\phib_{-a}\in\Delta(J)$, which implies that
\[
	d_\Omega=\min_{s\in\Ob,s\le 0}|\set{(i,j)\mid i\in\Lambda,j\in\Lb,i+j+s=1}|.
\]
Observe the nice symmetry between $d_\calL$ and $d_\Omega$.

Recall that $\ev(f)=(f(P_1),f(P_2),\dots,f(P_n))$ for $f\in\Rbar=\bigcup_{s}\calL(G+sQ)$. Let $C_i=\ev(\calL(G+iQ))\subset\F^n$ for $i\in\Z$. The codes $C_i$ are sometimes used in formulating minimum distance bounds of AG codes  \cite{geil2011,geil2013}. 

\begin{lem}\label{qerdd}
$s\in\Omegab$ if and only if $-s+1\not\in\Lb$ or $C_{-s}\neq C_{-s+1}$.
\end{lem}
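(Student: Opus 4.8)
The plan is to unwind the three modules $\Omegab$, $\Lb$, and the chain $C_i$ in terms of dimensions, and to translate the membership conditions $s\in\Omegab$, $-s+1\in\Lb$, $C_{-s}\neq C_{-s+1}$ into statements about the same pair of spaces. Write $a=-s$. First I would record the elementary reformulations: $s\in\Omegab$ means $\Omega(-D+G-sQ)\supsetneq\Omega(-D+G-(s-1)Q)$, i.e.\ $\dim\Omega(-D+G+aQ)>\dim\Omega(-D+G+(a-1)Q)$; and $-s+1=a+1\in\Lb$ means $\calL(G+(a+1)Q)\supsetneq\calL(G+aQ)$, i.e.\ the dimension jumps when passing from $aQ$ to $(a+1)Q$. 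The condition $C_{-s}=C_{-s+1}$ says $\ev(\calL(G+aQ))=\ev(\calL(G+(a+1)Q))$, which — since $\ev$ restricted to $\calL(G+(a+1)Q)$ has kernel $\calL(G+(a+1)Q-D)$ — is equivalent to
\[
\dim\calL(G+(a+1)Q)-\dim\calL(G+(a+1)Q-D)=\dim\calL(G+aQ)-\dim\calL(G+aQ-D).
\]

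Next I would apply Riemann--Roch to the divisor $A=G+(a+1)Q-D$. We have $\dim\Omega(A)=\dim\Omega(-D+G-sQ)$ with $s=-a$, precisely the space whose jump controls $s\in\Omegab$; so by Riemann--Roch $\dim\calL(G+(a+1)Q-D)=\deg(G)+a+1-n+1-g+\dim\Omega(-D+G-sQ)$, and similarly with $a$ in place of $a+1$ and $s+1$ in place of $s$. Subtracting, the $-D$-shifted $\calL$-dimension jumps by exactly $1+\bigl(\dim\Omega(-D+G-sQ)-\dim\Omega(-D+G-(s-1)Q)\bigr)$; that second term is $1$ if $s\in\Omegab$ and $0$ otherwise. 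Meanwhile $\dim\calL(G+(a+1)Q)-\dim\calL(G+aQ)$ is $1$ if $a+1\in\Lb$ and $0$ otherwise (and is always $\le 1$ by the valuation at $Q$). Feeding these into the displayed characterization of $C_{-s}=C_{-s+1}$ gives
\[
\mathbf{1}[a+1\in\Lb]=1+\mathbf{1}[s\in\Omegab]-\bigl(\dim\calL(G+(a+1)Q-D)-\dim\calL(G+aQ-D)\bigr),
\]
so $C_{-s}=C_{-s+1}$ iff $\mathbf{1}[a+1\in\Lb]=1+\mathbf{1}[s\in\Omegab]-\bigl(1+\mathbf{1}[s\in\Omegab]\bigr)=0$ fails to match unless one tracks the third quantity directly; cleaner is to observe $C_{-s}=C_{-s+1}$ iff the $(-D)$-shifted jump equals the unshifted $\calL$-jump, i.e.\ iff $\mathbf{1}[a+1\in\Lb]=1-\mathbf{1}[s\in\Omegab]$ is \emph{violated}, which after rearranging yields: $C_{-s}\neq C_{-s+1}$ iff $a+1\in\Lb$ and $s\in\Omegab$, OR $a+1\notin\Lb$ and $s\notin\Omegab$ — i.e.\ iff $\mathbf{1}[a+1\in\Lb]=1-\mathbf{1}[s\in\Omegab]$ holds. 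Case-splitting on the four Boolean possibilities then shows the claimed equivalence "$s\in\Omegab$ iff $a+1\notin\Lb$ or $C_{-s}\neq C_{-s+1}$": if $s\in\Omegab$ and $a+1\in\Lb$ then the violation gives $C_{-s}\neq C_{-s+1}$; if $s\in\Omegab$ and $a+1\notin\Lb$ the right side holds trivially; conversely if $s\notin\Omegab$ then $a+1\in\Lb$ (since the total jump is $1$), and the match holds so $C_{-s}=C_{-s+1}$, making the right side false.

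I expect the main obstacle to be bookkeeping the signs and the three simultaneous Riemann--Roch applications cleanly — in particular keeping straight which divisor's $\Omega$-term is the one governing $\Omegab$, and verifying that the jump $\dim\calL(G+(a+1)Q)-\dim\calL(G+aQ)$ is always $0$ or $1$ (immediate from $v_Q$) so that the two indicator functions genuinely partition the cases. Once the identity relating the three jumps is in hand, the equivalence is a finite case check. This mirrors the proof of Lemma~32 in \cite{kwankyu2014}, and I would cite that for the parallel argument rather than re-deriving the valuation-theoretic facts about $\Lb$ and $\Omegab$.
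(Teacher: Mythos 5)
Your overall strategy is the same as the paper's: translate all three membership conditions into dimension statements via $\dim_\F C_i=\dim_\F\calL(G+iQ)-\dim_\F\calL(-D+G+iQ)$, apply Riemann--Roch to the $-D$-shifted divisors, and finish with a case check. The paper packages this as a three-line chain of equivalences; your indicator-function bookkeeping is the same argument unrolled. However, the central computation as you have written it is false, and the passage meant to repair it contradicts itself. With $a=-s$, Riemann--Roch applied to $-D+G+(a+1)Q$ and $-D+G+aQ$ gives
\[
\dim\calL(-D+G+(a+1)Q)-\dim\calL(-D+G+aQ)=1+\dim\Omega(-D+G-(s-1)Q)-\dim\Omega(-D+G-sQ)=1-\mathbf{1}[s\in\Omegab],
\]
since $-D+G+(a+1)Q=-D+G-(s-1)Q$ and the $\Omega$-spaces shrink as the divisor grows. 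You wrote the difference of the $\Omega$-dimensions with the opposite sign and then read it as ``$1$ if $s\in\Omegab$'', so your shifted jump becomes $1+\mathbf{1}[s\in\Omegab]$, which would equal $2$ when $s\in\Omegab$ --- impossible, since adding a single point to a divisor increases $\dim\calL$ by at most $1$. The same off-by-one appears when you identify $\dim\Omega(A)$ for $A=G+(a+1)Q-D$ with $\dim\Omega(-D+G-sQ)$ (it is $\dim\Omega(-D+G-(s-1)Q)$), and in your opening reformulation of $s\in\Omegab$, where $(a-1)Q$ should be $(a+1)Q$.

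These errors surface in the middle of the argument: the displayed identity for $\mathbf{1}[a+1\in\Lb]$ is circular (its right-hand side still contains the very jump being computed) and you concede it ``fails to match''; the ``cleaner'' restatement asserts both that $C_{-s}=C_{-s+1}$ holds iff $\mathbf{1}[a+1\in\Lb]=1-\mathbf{1}[s\in\Omegab]$ is \emph{violated} and, two lines later, that $C_{-s}\neq C_{-s+1}$ holds iff that same identity \emph{holds}. The correct statement, which your final case check silently uses, is: $C_{-s}=C_{-s+1}$ iff the shifted jump $1-\mathbf{1}[s\in\Omegab]$ equals the unshifted jump $\mathbf{1}[a+1\in\Lb]$, hence $C_{-s}\neq C_{-s+1}$ iff $\mathbf{1}[s\in\Omegab]=\mathbf{1}[a+1\in\Lb]$; the case where both indicators vanish is vacuous because the shifted jump never exceeds the unshifted one (equivalently, $s\notin\Omegab$ forces $a+1\in\Lb$, the fact you invoke as ``the total jump is $1$''). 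With the sign fixed, your four-way case check is exactly the paper's last equivalence, so the proof is repairable rather than wrong in conception --- but as written the key identity is false and the derivation does not go through.
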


\begin{IEEEproof}
By the Riemann-Roch theorem,
\[
	\begin{split}
	s\in\Omegab&\iff \dim_\F\Omega(-D+G+(-s+1)Q)\neq\dim_\F\Omega(-D+G-sQ)\\
	&\iff \dim_\F\calL(-D+G+(-s+1)Q)=\dim_\F\calL(-D+G-sQ)\\
	&\iff -s+1\not\in\Lb\text{ or } C_{-s+1}\neq C_{-s}
	\end{split}
\]
For the last equivalence, recall that $\dim_\F C_i=\dim_\F\calL(G+iQ)-\dim_\F\calL(-D+G+iQ)$.
\end{IEEEproof}

Lemma \ref{qerdd} allows us to write
\[
	\begin{aligned}
	d_\Omega &=\min_{C_{k}\neq C_{k-1},k\ge 1}|\set{(i,j)\mid i\in\Lambda,j\in\Lb,i+j=k}|,\\
	d_\calL &%=\min_{s\in\Lb,s\le 0}|\set{(i,j)\mid i\in\Lambda,j\in\Ob,i+s=-j+1}|
	=\min_{s\in\Lb,s\le 0}|\set{i\in\Lambda \mid C_{i+s}\neq C_{i+s-1}}|.
	\end{aligned}
\]
Finally let us focus on one-point AG codes with $G=mQ$. In this case $\Lb=\Lambda-m$. Let $H_i=\ev(\calL(iQ))=C_{i-m}$ and $H^*=\set{i\mid H_i\neq H_{i-1}}\subset\Lambda$. Then
\[
	\begin{aligned}
	d_\Omega &=\min_{k\in H^*,k>m}|\set{(i,j)\mid i,j\in\Lambda,i+j=k}|
	\ge\min_{k\in H^*,k>m}|\set{(i,j)\mid i,j\in H^*,i+j=k}|,\\
	d_\calL &=\min_{s\in\Lambda,s\le m}|\set{i\in\Lambda\mid i+s\in H^*}\ge\min_{s\in\Lambda,s\le m}|\set{i\in H^*\mid i+s\in H^*},
	\end{aligned}
\]
where the former is the famous Feng-Rao bound of the one-point AG code $C_\Omega(D,mQ)$ and the latter is a slight variation of the bound $d^*$ defined in \cite{geil2011}.

\section{Example}\label{dkdkf}

Let $X$ be the Hermitian curve of genus $g=3$ defined by the equation
\[
	\y^3+\y=\x^4
\]
over $\F_9=\F_{3}(\ga)$ with $\ga^2-\ga-1=0$. Let $G=-O+18Q$ where $O$ is the origin and $Q$ is the unique point at infinity. Except $O$ and $Q$, there are $26$ rational points
\[
	\begin{gathered}
	 ( 0, \ga^2 ), ( 0, \ga^6 ), ( 1, 2 ), ( 1, \ga ), ( 1, \ga^3 ), ( 2, 2 ), ( 2, \ga ),\\
	 ( 2, \ga^3 ),( \ga, 1 ), ( \ga, \ga^7 ), ( \ga, \ga^5 ), ( \ga^2, 2 ), ( \ga^2, \ga ),(\ga^2, \ga^3 ),\\
	 ( \ga^7, 1 ),( \ga^7, \ga^7 ), ( \ga^7, \ga^5 ), ( \ga^5, 1 ), ( \ga^5, \ga^7 ), ( \ga^5, \ga^5 ), \\
	 ( \ga^3, 1 ),( \ga^3, \ga^7 ), ( \ga^3, \ga^5 ), ( \ga^6, 2 ), ( \ga^6, \ga ), ( \ga^6, \ga^3 ).
	\end{gathered}
\]
Then the differential AG code $C=C_\Omega(D,G)$ is a $[26,11,13]$ linear code over $\F_9$. Note that this is the dual code of $C_\calL(D,G)$ dealt in Section IV-A of \cite{kwankyu2014}. Hence the data about $R$ are the same, but the data about $\Wb$ are new.

The Weierstrass semigroup at $Q$ is 
\[
	\gL=\set{0,3,4,6,7,8,9,\dots}.
\]
So $\gam=3$, and we take $x=\x$. The Ap\'ery system of $R$ is
\[
	\begin{aligned}
	y_0&=1, &\rho(y_0)=0,\\
	y_1&=\y, &\rho(y_1)=4, \\
	y_2&=\y^2, &\rho(y_2)=8.
	\end{aligned}
\]
On the other hand,
\[	
	\Ob=\{-13,-10,-9,-7,-6,-5,-4,-3,-2,-1,0,1,2,3,\dots\},
\]
and the Ap\'ery system of $\Wb$ is
\[
	\begin{aligned}
	\ob_0&=\y(\x^9-\x)^{-1}d\x, &\gd(\ob_0)=-9,\\
	\ob_1&=\y^2(\x^9-\x)^{-1}d\x,&\gd(\ob_1)=-5,\\
	\ob_2&=(\x^9-\x)^{-1}d\x,&\gd(\ob_2)=-13.
	\end{aligned}
\]

The monomials of $Rz\oplus\Wb$ are displayed in the array below, where the common $z$ factor of the monomials of $Rz$ are omitted.
\[
%\tikzset{external/remake next}
\begin{tikzpicture}[scale=1.2,x=4mm,y=4mm,baseline=(current bounding box)]
	\draw (-0.5,0.5) node[scale=.6] {$y_0$}; 
	\draw (-0.5,1.5) node[scale=.6] {$y_1$}; 
	\draw (-0.5,2.5) node[scale=.6] {$y_2$};
  	\draw[dimgray,fill] (0,0) |- (11,3) |- (0,0);	
	\draw (0,0) grid[step=4mm] +(11,3);
	\draw (0,0) +(.5,.5) node[scale=.6] {$1^{\vphantom{1}}$};
	\draw (1,0) +(.5,.5) node[scale=.6] {$x^{\vphantom{1}}$};
	\draw (10,0) +(.5,.5) node[scale=.6] {$\cdots$};
	\foreach \x in {2,...,9}	  
	  		\draw (\x,0) +(.5,.5) node[scale=.6] {$x^{\x}$};
	\draw (0,1) +(.5,.5) node[scale=.6] {$1^{\vphantom{1}}$};
	\draw (1,1) +(.5,.5) node[scale=.6] {$x^{\vphantom{1}}$};
	\draw (10,1) +(.5,.5) node[scale=.6] {$\cdots$};
	\foreach \x in {2,...,9}	  
	  		\draw (\x,1) +(.5,.5) node[scale=.6] {$x^{\x}$};
	\draw (0,2) +(.5,.5) node[scale=.6] {$1^{\vphantom{1}}$};
	\draw (1,2) +(.5,.5) node[scale=.6] {$x^{\vphantom{1}}$};
	\draw (10,2) +(.5,.5) node[scale=.6] {$\cdots$};
	\foreach \x in {2,...,9}	  
	  	\draw (\x,2) +(.5,.5) node[scale=.6] {$x^{\x}$};	  
\end{tikzpicture}
\quad
%\tikzset{external/remake next}
\begin{tikzpicture}[scale=1.2,x=4mm,y=4mm,baseline=(current bounding box)]
  	\draw (-0.5,0.5) node[scale=.6] {$\ob_0$}; 
  	\draw (-0.5,1.5) node[scale=.6] {$\ob_1$}; 
  	\draw (-0.5,2.5) node[scale=.6] {$\ob_2$}; 
  	\draw[dimgray,fill] (0,0) |- (11,3) |- (0,0);	
	\draw (0,0) grid[step=4mm] +(11,3);
	\draw (0,0) +(.5,.5) node[scale=.6] {$1^{\vphantom{1}}$};
	\draw (1,0) +(.5,.5) node[scale=.6] {$x^{\vphantom{1}}$};
	\draw (10,0) +(.5,.5) node[scale=.6] {$\cdots$};
	\foreach \x in {2,...,9}	  
	  		\draw (\x,0) +(.5,.5) node[scale=.6] {$x^{\x}$};
	\draw (0,1) +(.5,.5) node[scale=.6] {$1^{\vphantom{1}}$};
	\draw (1,1) +(.5,.5) node[scale=.6] {$x^{\vphantom{1}}$};
	\draw (10,1) +(.5,.5) node[scale=.6] {$\cdots$};
	\foreach \x in {2,...,9}	  
	  		\draw (\x,1) +(.5,.5) node[scale=.6] {$x^{\x}$};
	\draw (0,2) +(.5,.5) node[scale=.6] {$1^{\vphantom{1}}$};
	\draw (1,2) +(.5,.5) node[scale=.6] {$x^{\vphantom{1}}$};
	\draw (10,2) +(.5,.5) node[scale=.6] {$\cdots$};
	\foreach \x in {2,...,9}	  
	  	\draw (\x,2) +(.5,.5) node[scale=.6] {$x^{\x}$};	  
\end{tikzpicture}
\]
Here are the corresponding nongaps of $\Lambda$ and $\Wb$, respectively.
\[
%\tikzset{external/remake next}
\begin{tikzpicture}[scale=1.2,x=4mm,y=4mm,baseline=(current bounding box),
	declare function={
		a(\u) = int(0 + 3*\u);
		b(\u) = int(4 + 3*\u);
		c(\u) = int(8 + 3*\u);
	}]
	\draw (-0.5,0.5) node[scale=.6] {$\phantom{y_0}$}; 
	\draw (-0.5,1.5) node[scale=.6] {$\phantom{y_1}$}; 
	\draw (-0.5,2.5) node[scale=.6] {$\phantom{y_2}$};
  	\draw[dimgray,fill] (0,0) |- (11,3) |- (0,0);	
	\draw (0,0) grid[step=4mm] +(11,3);
	\draw (10,0) +(.5,.5) node[scale=.6] {$\cdots$};
	\foreach \x in {0,...,9}	  
	  	\draw (\x,0) +(.5,.5) node[scale=.6] {$\print{a(\x)}$};
	\draw (10,1) +(.5,.5) node[scale=.6] {$\cdots$};
	\foreach \x in {0,...,9}	  
	  	\draw (\x,1) +(.5,.5) node[scale=.6] {$\print{b(\x)}$};
	\draw (10,2) +(.5,.5) node[scale=.6] {$\cdots$};
	\foreach \x in {0,...,9}	  
	  	\draw (\x,2) +(.5,.5) node[scale=.6] {$\print{c(\x)}$};	  
\end{tikzpicture}
\quad
%\tikzset{external/remake next}
\begin{tikzpicture}[scale=1.2,x=4mm,y=4mm,baseline=(current bounding box),
	declare function={
		a(\u) = int(-9 + 3*\u);
		b(\u) = int(-5 + 3*\u);
		c(\u) = int(-13 + 3*\u);
	}]
  	\draw (-0.5,0.5) node[scale=.6] {$\phantom{\yb_0}$}; 
  	\draw (-0.5,1.5) node[scale=.6] {$\phantom{\yb_1}$}; 
  	\draw (-0.5,2.5) node[scale=.6] {$\phantom{\yb_2}$}; 
  	\draw[dimgray,fill] (0,0) |- (11,3) |- (0,0);	
	\draw (0,0) grid[step=4mm] +(11,3);
	\draw (10,0) +(.5,.5) node[scale=.6] {$\cdots$};
	\foreach \x in {0,...,9}	  
	  	\draw (\x,0) +(.5,.5) node[scale=.6] {$\print{a(\x)}$};
	\draw (10,1) +(.5,.5) node[scale=.6] {$\cdots$};
	\foreach \x in {0,...,9}	  
	  	\draw (\x,1) +(.5,.5) node[scale=.6] {$\print{b(\x)}$};
	\draw (10,2) +(.5,.5) node[scale=.6] {$\cdots$};
	\foreach \x in {0,...,9}	  
	  	\draw (\x,2) +(.5,.5) node[scale=.6] {$\print{c(\x)}$};	  
\end{tikzpicture}
\]

The Lagrange interpolation polynomials in $\Wb$ are
\[
	\begin{aligned}
	h_1&=(-x^8 + 1)\ob_2+(\ga^2x^8+\ga^6)\ob_0,\\
	h_2&=(-x^8 + 1)\ob_2 + (\ga^6x^8 + \ga^2)\ob_0,\\
	\vdots\\
	h_{26}&=(\ga^6x^8 + \ga^3x^7 + \cdots + 1)\ob_2 \\
	&\quad+ (\ga^6x^7 + 2x^6 + \cdots + 1)\ob_1 \\
	&\quad+ (\ga^3x^8 + \ga x^7 + \cdots + \ga^5x)\ob_0.
    	\end{aligned}
\]

The $\F_9[x]$-submodule $J$ of $\Wb$ has Gr\"obner basis $\set{\eta_0,\eta_1,\eta_2}$, where
\[
	\begin{aligned}
	\eta_0&=(x^9-x)\ob_0,\\
	\eta_1&=(x^8-1)\ob_1+(x^8-1)\ob_2,\\
	\eta_2&=(x^9-x)\ob_2.
	\end{aligned}
\]

Using these data, we can compute
\[
	\begin{array}{cccc}
	s &  \nu(s) \\
	\hline
	0&13\\
	-1&14\\
	-2&15\\
	-3&16\\
	-4&17\\
	-5&18\\

	\end{array}
	\quad
	\begin{array}{ccc}
	s &  \nu(s) \\
	\hline
	-6&19\\
	-7&20\\	
	-9&22\\
	-10&23\\
	-13&26\\
	\\
	\end{array}
\]
and hence $d_\Omega=13$. The Fast Decoding Algorithm runs with the received vector $v$ as input and with the above precomputed data. The decoding process itself is similar to that of the example in Section IV-A in \cite{kwankyu2014}. Thus we omit an example run of the algorithm.

\section{Decoding Goppa codes}\label{xnmlf}

Let $L=\set{\ga_1,\ga_2,\dots,\ga_n}\subset\F$ be a set of distinct rational points of the projective line $\mathrm{P}_\F$ over $\F=\F_{q^m}$, and let $P_\infty$ denote the point at infinity. Let $D$ denote the divisor of zeros of $\prod_{k=1}^n(x-\ga_k)$. Let $g(x)\in\F[x]$ be a given polynomial with $g(\ga_i)\neq 0$ for $1\le i\le n$, and let $Z$ be the divisor of zeros of $g(x)$. We assume $\deg g(x)<n$. Then the classical Goppa code $C|_{\F_q}$ is a subfield subcode over $\F_q$ of the differential AG code 
\[
	C=C_\Omega(D,Z-P_\infty).
\]
As $C$ is an $[n,k,d]$ code with $k=n-\deg g(x)$ and $d\ge\deg g(x)+1$, the Goppa code $C|_{\F_q}$ is an $[n,k',d']$ linear code over $\F_q$ with $k'\ge n-m\deg g(x)$, $d'\ge\deg g(x)+1$. See \cite{stichtenoth2009} for more on Goppa codes and subfield subcodes of AG codes.

Clearly, the Fast Decoding Algorithm can decode the subfield subcode $C|_{\F_q}$ just by decoding $C$. So let us specialise the algorithm for $C$.  Let $X=\mathrm{P}_\F$. The genus of the projective line is $g=0$, and $\F(X)=\F(x)$, $R=\F[x]$, $\Lambda=\Z_{\ge 0}$. For $f(x)\in\F[x]$, simply $\rho(f(x))=\deg f(x)$. In particular, $\rho(x)=\gamma=1$ and $\rho(y_0)=a_0=0$ with $y_0=1$. Furthermore $\Omega_X=\F(x)dx$ and we can show that $\Wb=\F[x]\ob_0$ and $\Ob=\Z_{\ge 0}+b_0$ where
\[
	\ob_0=\frac{g(x)}{\prod_{i=1}^n(x-\ga_i)}dx,\quad b_0=\deg g(x)-n+1,
\]
noting that $\gd(f(x)\ob_0)=\deg f(x)+b_0$. For $\omega=f(x)\ob_0\in\Wb$, we have
\[
	\res(\omega)=(f(\ga_1)g(\ga_1),f(\ga_2)g(\ga_2),\dots,f(\ga_n)g(\ga_n))
\]
Therefore $J=\F[x]\eta_0$ where
\[
	\eta_0=\prod_{i=1}^n(x-\ga_i)\ob_0
\]
and for $1\le i\le n$,
\[
	h_i=\frac{\prod_{j=1,j\neq i}^n(x-\ga_j)}{g(\ga_i)\prod_{j=1,j\neq i}^n(\ga_i-\ga_j)}\ob_0.
\]

With the above precomputed data, the Fast Decoding Algorithm for Goppa codes is reduced to the following simple algorithm.

\begin{center}
\parbox{.9\textwidth}{
\begin{GDA}
Let $v\in\F^n$ be the received vector. 
\begin{itemize}
\item[\textbf{S1}] Compute $h_v=\sum_{i=1}^nv_ih_i$.  Let $G^{(n-1)}=\eta_0$, $F^{(n-1)}=z-h_v$ and $\nu^{(n-1)}=1$. 
\item[\textbf{S2}] Repeat the following for $s$ from $n-1$ to $0$.
\begin{itemize}
\item[\textbf{M}]
Suppose $G^{(s)}=Cz+D\ob_0,F^{(s)}=Az+B\ob_0$. Let $k=\deg(A)+s$, and $c=\deg(D)-k$.
\\[1ex]
If $s\ge n-\deg g(x)$, then let $m=-B[x^{k}]$. If $m=0$, let 
\[
	G^{(s-1)}=G^{(s)}\quad F^{(s-1)}=F^{(s)},
\]
and $\nu^{(s-1)}=\nu^{(s)}$. If $m\neq 0$ and $c>0$, then let
\[
	G^{(s-1)}=F^{(s)}\quad
	F^{(s-1)}=\nu^{(s)}x^{c}F^{(s)}+mG^{(s)}	
\]
and let $\nu^{(s-1)}=-m$. If $m\neq 0$ and $c\le 0$, then let
\[
	G^{(s-1)}=G^{(s)}\quad
	F^{(s-1)}=\nu^{(s)}F^{(s)}+mx^{-c}G^{(s)}
\]
and let $\nu^{(s-1)}=\nu^{(s)}$.
\\[1ex]
If  $s<n-\deg g(x)$, then let $\frakm_s=-\LC(A)^{-1}B[x^{k}]$ and let
\[
	G^{(s-1)}=G^{(s)}(z+\frakm_sx^s\ob_0)\quad
	F^{(s-1)}=F^{(s)}(z+\frakm_sx^s\ob_0)
\]
and let $\nu^{(s-1)}=\nu^{(s)}$. 
\end{itemize}
\item[\textbf{S3}] 
Let $\mu=\sum_{s=0}^{n-\deg g(x)-1}\frakm_sx^s$ and $r=(\mu(\ga_i)g(\ga_i)\mid 1\le i\le n)$. If $r\in\F_q^{\,n}$, output $r$. Otherwise declare \emph{Decoding Failure}.
\end{itemize}
\end{GDA}
}
\end{center}

Note that the differential $\ob_0$, as well as $z$, is just a placeholder, and the algorithm actually works with a $2\times 2$ array of univariate polynomials. We made some changes in notations for clarification and restructured the algorithm slightly for ready implementation. It should be noted that when the algorithm is in second phase $s<n-\deg g(x)$, the update of $G^{(s)}$ is actually unnecessary. 

As is well-known, the binary Goppa code $C|_{\F_2}$ defined by a separable polynomial $g(x)\in\F_{2^m}(x)$ is identical with the Goppa code defined by $g(x)^2$. Hence $C|_{\F_2}$ has dimension $\ge n-m\deg g(x)$ and minimum distance $2\deg g(x)+1$. Moreover the Goppa code $C|_{\F_2}$ can be decoded up to $\deg g(x)$ number of errors by the Fast Decoding Algorithm for the Goppa code defined by $g(x)^2$. We take this approach in the following example.

\begin{ex}
Let $C|_{\F_2}$ be the binary Goppa code defined by $L=\set{0,1,\ga,\ga^2,\dots,\ga^6}\subset\F_8=\F_2[\ga]$  with $\ga^3+\ga+1=0$ and the generator polynomial $g(x)^2\in\F_8[x]$ with $g(x)=z^2+z+1$. This is an $[8,2,5]$ Goppa code, capable of correcting two errors. Precomputed are $\eta_0=(x^8+x)\ob_0$ and 
\[
	\begin{aligned}
    h_1&=(x^7 + 1)\ob_0,\\
    h_2&=(x^7 + x^6 + x^5 + x^4 + x^3 + x^2 + x)\ob_0,\\
	\vdots\\
    h_8&=(\ga x^7 + x^6 + \ga^6x^5 + \ga^5x^4 + \ga^4x^3 + \ga^3x^2 + \ga^2x)\ob_0
    \end{aligned}
\]
Suppose $v=(1,1,1,1,1,1,1,0)$ is the received vector. The algorithm iteratively computes the following for $s=7,6,\dots,0$.
\[
\begin{array}{rcrcr}
G^{(7)}&=& 0\,z&+&(x^8 + x)\gw_0\\
F^{(7)}&=& 1\,z&+&(\ga x^7 + \ga^2x^5 + \ga^4x^4 + \ga^5x^3 + \ga^3x^2 + \ga^2x + 1)\gw_0\\[1ex]
G^{(6)}&=& 0\,z&+&(\ga x^7 + \ga^2x^5 + \ga^4x^4 + \ga^5x^3 + \ga^3x^2 + \ga^2x + 1)\gw_0\\
F^{(6)}&=& x\,z&+&(\ga^2x^6 + \ga^4x^5 + \ga^5x^4 + \ga^3x^3 + \ga^2x^2 + \ga^3x)\gw_0\\[1ex]
\vdots\\[1ex]
G^{(0)}&=& x\,z&+&(\ga^2x^6 + \ga^4x^5 + \ga^4x^4 + \ga x^3 + \ga^2x^2 + \ga^3 x)\gw_0\\
F^{(0)}&=& (\ga^3x^2+\ga^5x + \ga^4)\,z&+&(\ga^3x^2+\ga^5x+\ga^4)\gw_0\\
\end{array}
\]
and in the second phase, it computes $\frakm_3=1,\frakm_2=1,\frakm_1=0,\frakm_0=1$. Then from $\mu=1+x^2+x^3$, the algorithm computes the corrected codeword $(1,1,1,1,0,1,0,0)$ to output.
\end{ex}

\section{Final Remarks}

We presented a fast unique decoding algorithm for differential AG codes. The principle of the algorithm is the same with the companion algorithm for evaluation AG codes in \cite{kwankyu2014}, and the description of the algorithm is almost identical with that of the other. In this paper, we focused on the differences and omitted most of the repetitive parts. The list decoding extension for evaluation AG codes done by \cite{geil2012b} can be done for differential AG codes in the same way.

This work is partially done while I visited Maria~Bras-Amor{\'o}s. The author thanks her for helpful discussions.

%\bibliographystyle{IEEEtran}
%\bibliography{IEEEabrv,../../../bibliographies/CodingTheory,../../../bibliographies/Mathematics}

% Generated by IEEEtran.bst, version: 1.13 (2008/09/30)

\end{document}